\newcommand\algorithmicprocedure{\textbf{procedure}}
\newcommand{\algorithmicendprocedure}{\algorithmicend\ \algorithmicprocedure}
\newcommand\PROCEDURE[3][default]{%
  \ALC@it
  \algorithmicprocedure\ \textsc{#2}(#3)%
  \ALC@com{#1}%
  \begin{ALC@prc}%
}
\newcommand\ENDPROCEDURE{%
  \end{ALC@prc}%
  \ifthenelse{\boolean{ALC@noend}}{}{%
    \ALC@it\algorithmicendprocedure
  }%
}
\newenvironment{ALC@prc}{\begin{ALC@g}}{\end{ALC@g}}
\DeclareMathOperator{\sgn}{\bf sgn}
\DeclareMathOperator{\svd}{\bf svd}
\DeclareMathOperator{\expt}{\mathbb{E}}
\DeclareMathOperator{\diag}{\bf diag}
\newtheorem{lemma}{Lemma}
\begin{document}
\title{Graph Representation Learning for Contention and Interference Management in Wireless Networks}

\author{
  \IEEEauthorblockN{Zhouyou Gu, Branka Vucetic, Kishore Chikkam, Pasquale Aliberti, Wibowo Hardjawana}\\
    \thanks{This work is supported in part by the University of Sydney's External Research Collaboration Seed Funding (Morse Micro), DVC Research, 2022, in part by an Australian Government Research Training Program Scholarship, in part by the University of Sydney's Supplementary and Completion Scholarships. The work of Zhouyou Gu was also supported in part by the Australian Research Council (ARC) Discovery grant number DP210103410 and the work of Branka Vucetic by the ARC Laureate Fellowship grant number FL160100032 and Discovery grant number DP210103410. The work of Wibowo Hardjawana was supported in part by the ARC Discovery grant number DP210103410.
    \emph{(Corresponding author: Zhouyou Gu.)}}
    \thanks{Z. Gu, B. Vucetic and  W. Hardjawana are with the School of Electrical and Information Engineering, the University of Sydney, Sydney, NSW 2006, Australia (email: \{zhouyou.gu, branka.vucetic, wibowo.hardjawana\}@sydney.edu.au).}
    \thanks{K. Chikkam and P. Aliberti are with Morse Micro, Sydney, Australia (email: \{kishore, pasquale.aliberti\}@morsemicro.com).}
    \thanks{Source codes are available at {https://github.com/zhouyou-gu/ac-grl-wi-fi}}
    }
\maketitle
\begin{abstract}
Restricted access window (RAW) in Wi-Fi 802.11ah networks manages contention and interference by grouping users and allocating periodic time slots for each group's transmissions. We will find the optimal user grouping decisions in RAW to maximize the network's worst-case user throughput. We review existing user grouping approaches and highlight their performance limitations in the above problem. We propose formulating user grouping as a graph construction problem where vertices represent users and edge weights indicate the contention and interference. This formulation leverages the graph's max cut to group users and optimizes edge weights to construct the optimal graph whose max cut yields the optimal grouping decisions. To achieve this optimal graph construction, we design an actor-critic graph representation learning (AC-GRL) algorithm. Specifically, the actor neural network (NN) is trained to estimate the optimal graph's edge weights using path losses between users and access points. A graph cut procedure uses semidefinite programming to solve the max cut efficiently and return the grouping decisions for the given weights. The critic NN approximates user throughput achieved by the above-returned decisions and is used to improve the actor. Additionally, we present an architecture that uses the online-measured throughput and path losses to fine-tune the decisions in response to changes in user populations and their locations. Simulations show that our methods achieve $30\%\sim80\%$ higher worst-case user throughput than the existing approaches and that the proposed architecture can further improve the worst-case user throughput by $5\%\sim30\%$ while ensuring timely updates of grouping decisions.
\end{abstract}
\begin{IEEEkeywords} 
User grouping, graph constructions, actor-critic algorithms.
\end{IEEEkeywords}
\section{Introduction}
With increasing demands for wireless connectivity, the IEEE task group ah has developed a dedicated standard, namely IEEE 802.11ah  (or Wi-Fi HaLow), providing low-power, long-range wireless connections to network users \cite{adame2014ieee,80211ah}.
Like most 802.11 families, Wi-Fi HaLow users rely on carrier-sense multiple access with collision avoidance (CSMA/CA) for channel access.
Specifically, when a user senses other users are transmitting on the channel, it will wait until the channel is free and further backoff for a random time before transmitting its packet to avoid packet collisions with other users.

Two issues exist in the Wi-Fi HaLow's channel access scheme \cite{jiang2007improving}.
First, when the user senses many other users' transmissions, e.g., geographically surrounded by many users, it will experience a significant waiting and backoff time before each transmission, triggered by the surrounding users' transmissions. Hence, the user can hardly make its transmissions.
Second, when two users cannot sense each other's transmissions due to large distance separation, i.e., hidden from each other, they can make concurrent transmissions, causing inter-user interference at the receiving access points (APs). Thus, the APs can barely decode transmissions due to low signal-to-interference-plus-noise ratios (SINR). 
We refer to neighboring users who trigger backoff or cause interference as the user's contending or hidden users, respectively.
The above two issues can cause the worst-case users to suffer from throughput starvation.
This creates outages in applications requiring service continuity, including remote healthcare and environmental hazard monitoring (e.g., for flood or earthquake) \cite{aust2015outdoor}, as the worst-case users cannot return their data to APs due to throughput starvation.

The Wi-Fi HaLow network introduces the restricted access window (RAW) mechanism \cite{80211ah,tian2017real} to reduce the contention and interference among users.
Specifically, users are divided into multiple groups in RAW. APs periodically schedule a RAW time slot for each user group's transmissions. 
Subsequently, users within each group only contend for packet transmissions during their assigned slots.
This means that transmissions from users of other groups do not trigger the waiting time or backoffs, thereby optimizing the time available for transmissions.
Also, the interference between two hidden users is eliminated if they are assigned to different RAW time slots.
The open literature has not studied how to design a user grouping scheme in scheduling RAW slots to avoid throughput starvation. Several existing approaches can potentially be applied to design user grouping schemes in RAW as follows.

\subsection{Related Works}\label{subsec:ac_grl_related_works}
\subsubsection{Markov-model-based Approach}
Research has been made to model each Wi-Fi user's channel access process as Markov chain \cite{bianchi2000performance}.
Works in \cite{chang2018traffic,kai2019energy} show that this model \cite{bianchi2000performance} can be used to formulate the optimization problem of user grouping problem in RAW, e.g., to balance the channel utilization \cite{chang2018traffic} and the energy efficiency \cite{kai2019energy} for user groups.
Note that the Markov chain in \cite{bianchi2000performance} only models the contention of users without considering the inter-user interference caused by hidden users in the network.
The model developed in \cite{garetto2008modeling} can also be used, which extends the work in \cite{bianchi2000performance} by modeling the interference from hidden users.
Thus, it achieves more accurate throughput estimation and can better capture the user throughput starvation effect in the network when applied to the user grouping problem.
However, applying this extended model \cite{garetto2008modeling} requires prior knowledge of all hidden-user pairs whose measurements introduce significant overheads in the network. The overheads limit the performance of Wi-Fi HaLow networks since the networks' radio resources, e.g., bandwidth and power, are limited.

\subsubsection{Graph-based Approach}
Wireless network optimization problems \cite{chen2022energy,chang2009multicell,liang2018graph} can be formulated as graph theory problems \cite{west2001introduction}, where the graph's vertices and edges represent the users and the contention or interference between the users, respectively.
The work in \cite{chen2022energy} constructs an unweighted undirected graph by connecting two users if an AP can detect both users' transmissions. Then, a graph coloring problem assigns time slots as colors (or groups) to users such that no adjacent users occupy the same slot.
Authors in \cite{chang2009multicell,liang2018graph} construct weighted graphs with edges indicating interference levels in user pairs. Here, undirected edges \cite{chang2009multicell} assume the interference equally impacts both users' throughput, while directed edges \cite{liang2018graph} differentiate the interference power made by two users.
Then, the graph's max cut divides users into a given number of groups by maximizing the sum of edge weights between groups, and each group is assigned orthogonal frequencies to manage the interference. 
Note that the graph constructions in \cite{chen2022energy,chang2009multicell,liang2018graph} heuristically use fixed rules predefined according to human intuition on how contention and interference affect the network performance.
Whether or not these graph constructions are optimal for the user grouping problem in RAW is unknown, and how to flexibly optimize the graph construction in wireless networks requires investigation.

\subsubsection{Machine Learning Approach}
Machine learning (ML) methods have recently been widely applied in wireless network optimizations \cite{szott2022wi}.
Typically, ML methods use network data to train a neural network (NN) that returns the optimal network control decisions for given network states \cite{gu2021knowledge}. Thus, they require no explicit model of the network behaviors such as contention and interference.
Note that classic fully connected NNs (FNNs) have pre-determined input and output dimensions, which are not flexible to varying dimensions of the problems, e.g., due to varying numbers of users in the network.
Instead, graph NNs (GNN) can be used with flexibility \cite{wu2020comprehensive}. 
This is because GNNs take network states as features on a graph whose size is adaptive to the dimensions of the problem. 
Then, GNNs obtain the control decisions by repeatedly aggregating features of neighboring vertices (or edges) along with trainable parameters \cite{eisen2020optimal,shen2020graph}. 
Nevertheless, research has not been done on applying ML-based approaches and designing NN structures to solve the user group problem in RAW.

\subsection{Our Contributions}
This paper studies scheduling RAW slots in Wi-Fi HaLow networks to avoid user throughput starvation. This is achieved by finding optimal user grouping decisions in RAW that maximize the worst-case user throughput \cite{huang2001max}.
We use path losses from users to APs as the network states to make the decisions, which are measurable at APs using signals sent by users \cite{pavon2003link} without additional measurement overheads.
First, we construct a fully connected weighted directed graph (or simply the graph onwards) to represent the network. 
Here, each user is a vertex. Each directed edge represents the asymmetric contention and interference between a user pair, and the edge weight indicates how negatively one user's transmissions impact the other user's throughput.
We apply the graph's max cut for the given edge weights to group users. Then, we propose adjusting the edge weights to construct the optimal graph whose max cut results in the optimal decisions.
Next, we design an ML algorithm that trains a graph-constructing actor (a FNN) to approximate optimal mapping from each user pair's states to the edge weights between them.
The algorithm contains a graph cut procedure using semidefinite programming (SDP) \cite{goemans1995improved} to efficiently solve the max cut of the given actor-constructed graph, which returns the grouping decisions.
Also, a graph-evaluating critic (a GNN) is trained to estimate user throughput for the given graph and states when using the above decisions, whose gradient optimizes the actor.
Here, we design the actor and the critic's structures so they can abstract contention and interference information from states, e.g., a part of NNs is optimized to infer probabilities of hidden user pairs.
Furthermore, since real-world networks can differ from those in offline training, we study how to adjust the graph based on online measurements.

The contributions of this paper are listed as follows.

\begin{itemize}
\item 
To the best of our knowledge, we are the first to propose an optimization framework that formulates the user grouping problem in RAW as a graph construction problem, where the graph's edge weights are optimization variables and the grouping decisions are computed from the constructed graph's max cut. Unlike the existing graph-based approach \cite{chen2022energy,chang2009multicell,liang2018graph} using heuristically constructed edges, the framework can flexibly optimize edge weights in the network's graph representation according to the specific network performance objective, i.e., maximizing the worst-case user throughput.

\item 
We are also the first to develop an actor-critic graph representation learning (AC-GRL) algorithm that trains NNs to construct the optimal graph representing the impact of interference in a wireless network. The interference in each individual user pair is represented as the edge weight in the pair. Note that the edge weight in the NN-constructed graph indicates how likely a pair of users belong to the same group when the graph's max cut is applied to obtain user grouping decisions. This is unlike the existing ML-based approach \cite{eisen2020optimal,shen2020graph} that directly uses NNs to generate user grouping decisions, which loses the above individual user-pairwise interference information by aggregating all neighboring users' features. Consequently, these methods fail to return useful decisions in user grouping.

\item 
We design a fine-tuning architecture that adjusts the NN-constructed graph according to user throughput and path losses measured online. 
Specifically, we first initialize the actor and critic using offline-trained NN parameters. The offline-trained actor then generates the graph's edge weights according to the initial network states. 
Then, the offline-trained critic continuously adjusts the graph's edge weights based on the user index with the worst-case throughput periodically measured from the network. 
At the same time, the actor also proportionally updates the weights based on the latest measured states.
After each edge weight update, user grouping decisions are recomputed by using the graph cut procedure.

\item We implement the proposed methods in a system-level simulation platform \cite{ns3}, NS-3, compliant with Wi-Fi standards, where we also study and apply the existing Markov-model-based \cite{chang2018traffic,kai2019energy}, graph-based \cite{chen2022energy,chang2009multicell,liang2018graph} and ML-based  \cite{eisen2020optimal,shen2020graph} approaches to the user grouping problem in RAW. Simulations show that our grouping decisions achieve around $30\%\sim80\%$ higher worst-case user throughput than the existing approaches.
Also, the proposed architecture can further improve the worst-case user throughput $5\%\sim30\%$ by online fine-tuning the graph constructed by the offline-trained NNs when users' locations are static. When users move, the proposed architecture achieves around $200\%$ higher worst-case user throughput by online fine-tuning the graph over time, compared to the fixed graph as its initial construction.
\end{itemize}

\subsection{Notation} 
The $i$-th element of a vector $\mathbf{x}$ is denoted as $x_{i}$.
The $j$-th element of the $i$-th row of a matrix $\mathbf{X}$ is denoted as $X_{i,j}$.
We write the definition of $\mathbf{X}$'s elements as $\mathbf{X}\triangleq[X_{i,j}|X_{i,j} = (\cdots)]$, where $(\cdots)$ is the expression defining $\mathbf{X}$'s elements.
$\diag\{\mathbf{X}\}$ denotes $\mathbf{X}$'s diagonal elements.
We write $\mathbf{X} \succeq 0$ to indicate that $\mathbf{X}$ is positive semidefinite. $\svd(\mathbf{X})$ is a singular value decomposition of $\mathbf{X}$. 
$\sgn(\mathbf{x})$ is a vector of each element's sign ($+1$ or $-1$) in $\mathbf{x}$. $\mathbf{1}_{\{\cdots\}}$ is an indicator function that equals to $1$ if the expression $\{\cdots\}$ is true or otherwise $0$.

\section{System Model and Problem Formulation}\label{sec:wifi_system_model}
This section first presents the system model of the wireless network, including the network configurations and states. Then, we formulate the user grouping problem for contention and interference management.

\subsection{Configurations of the Wireless Network}\label{subsec:wifi_net_configuration}
\begin{figure}[!ht]
\centering
\includegraphics[scale=0.65]{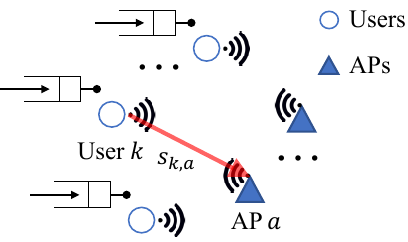}
\caption{Illustration of a wireless network.}
\label{fig:wifi_met}
\vspace{-0.2cm}
\end{figure}
We consider a random-access-based wireless network, e.g., a Wi-Fi 802.11ah (HaLow) network, that consists of $K$ users and $A$ APs, as illustrated in Fig. \ref{fig:wifi_met}. 
Here, we assume that users are sensors continuously collecting data to monitor critical information (e.g., on healthcare systems or environmental hazards). Users return the collected data samples to APs in the uplink \cite{aust2015outdoor}.
We assume that all users and APs operate in the same channel with a bandwidth $B$ in Hertz (Hz).
We denote the transmission power of users as $\mathbf{P}_\mathrm{0}$ in dBm and the noise power spectral density as $\mathbb{N}_\mathrm{0}$ in dBm/Hz.
We denote the path loss from the $k$-th user to the $a$-th AP as $\Tilde{s}_{k,a}$ in dB, $\forall k, a$. We assume each user is associated with and transmits to the AP with the minimum path loss to the user. The AP that the $k$-th user is associated with is denoted as $\hat{a}(k)$, e.g.,
\begin{equation}\label{eq:associated_ap_id}
\begin{aligned}
\hat{a}(k) \triangleq \arg\min_a \Tilde{s}_{k,a} \ ,\ \forall k \ .
\end{aligned}
\end{equation}
We assume that all sensors (or users) employ the same data-collecting mechanism that remains unchanged over time and that each collected data sample is encapsulated in a specific format with a constant length. Thus, we can assume that each user has a stationary packet arrival process with the packet size $L$ in bits.
Packets arrived at each user are stored in a first-in-first-out queue with a constant queue size, where the oldest packet is discarded when a new packet arrives and the queue is full.
The duration to transmit each $L$-bit packet of user $k$, $d_k$ in seconds, depends on the modulation and coding scheme (MCS).
We configure each user's MCS to satisfy that the decoding error probability without interference, $\epsilon_k$, is lower than a threshold $\epsilon_{\max}$.
Here, $\epsilon_k$ can be approximated as \cite{yang2014quasi}
\begin{equation}\label{eq:tx_error_for_mcs_wi-fi}
\begin{aligned}
 \epsilon_k \approx f_Q\Bigg(\frac{  -L\ln{2}+ {d_k B}\ln(1+\phi)}{\sqrt{d_k B V}}\Bigg)\  ,\ \forall k ,
\end{aligned}
\end{equation}
where $\phi = (\mathbf{P}_\mathrm{0}/\Tilde{s}_{k,\hat{a}(k)})/(\mathbb{N}_\mathrm{0}B)$ is the signal-to-noise-ratio (SNR) of user $k$.
Also, $f_Q$ is the tail distribution function of the standard normal distribution, and $V$ is the channel dispersion defined as $V = 1- {1}/{\big[1+\phi\big]^2}$ \cite{yang2014quasi} in \eqref{eq:tx_error_for_mcs_wi-fi}.
Here, since $\mathbf{P}_\mathrm{0}$, $\mathbb{N}_\mathrm{0}$, $B$ and $L$ in \eqref{eq:tx_error_for_mcs_wi-fi} are all assumed as constants, the packet duration of each user only depends on the path loss to its associated AP, $\Tilde{s}_{k,\hat{a}(k)}$, $\forall k$.

Each user operates a CSMA/CA process to access the channel and transmit their packets \cite{80211}, avoiding collisions with other users' packet transmissions. Specifically, each user first senses whether or not any other user is transmitting in the channel. 
If the above is true, the user continues to monitor the channel until the channel is sensed as idle. 
Then, the user backs off for a random time, which reduces the probability of collision with other users also waiting for transmission opportunities.
If the channel is sensed as busy again during the backoff, the backoff countdown is paused and resumed again after the channel is sensed as idle.
After the backoff ends, the user transmits the oldest packet from its queue over the channel.
If some users cannot sense this transmission, meaning they are hidden from that user, they will make simultaneous transmissions. 
These transmissions cause interference at the receiving AP and raise the decoding error probability.
When a packet transmission is failed, a negative acknowledgment is generated, and the user retries transmitting the same packet. Retransmissions continue until the user receives an acknowledgment or reaches the maximum attempts. The user then transmits the next packet in the queue using the same process.

\subsection{Network States}\label{subsec:wifi_net_states}
We assume that if the path loss between a user and an AP is greater than a threshold $\Tilde{s}_{\max}$ in dB, then the user's transmissions cannot be detected and received at the AP. Thus, this path loss cannot be measured from the network. 
We write the measured path loss from the $k$-th user to the $a$-th AP as
\begin{equation}
\begin{aligned}
s_{k,a} \triangleq 
\begin{cases}
    \Tilde{s}_{k,a}\ ,\ \text{if } \Tilde{s}_{k,a}\leq \Tilde{s}_{\max} \ , \\
    2\Tilde{s}_{\max}\ ,\ \text{if } \Tilde{s}_{k,a}>\Tilde{s}_{\max}  \ ,
\end{cases} \forall k,a \ ,
\end{aligned}
\end{equation}
where if $\Tilde{s}_{s,a}> \Tilde{s}_{\max}$, we set the value of $s_{k,a}$ as $2\Tilde{s}_{\max}$, indicating that it is immeasurable. 
Note that the path loss to each user's associated AP is always measurable because it is the user's minimum path loss to APs (otherwise, the user is not connected in the network), i.e.,  $\Tilde{s}_{k,\hat{a}(k)}=s_{k,\hat{a}(k)},\ \forall k$.
Each user's states are its measured path losses to APs, defined as\footnote{The path losses are normalized based on the maximum path loss where users' signals are detectable at APs as $s_{k,a}\leftarrow s_{k,a}/\Tilde{s}_{\max}-1$, resulting in path losses' range as $[-1,1]$. For simplicity of notations, we assume that path losses $s_{k,a}$, $\forall k,a$, have been normalized onwards unless specifically stated.}
\begin{equation}
\begin{aligned}
\mathbf{s}_k \triangleq [s_{k,1},\dots,s_{k,A}]^\textrm{T} \ ,\ \forall k \ ,
\end{aligned}
\end{equation}
and the whole network's states is a $A\times K$ matrix that contains all users' measured path losses, defined as
\begin{equation}
\begin{aligned}
\mathbf{S} \triangleq [\mathbf{s}_{1},\dots,\mathbf{s}_{K}] \ .
\end{aligned}
\end{equation}

\subsection{User Grouping Problem in RAW}\label{subsec:user_grouping_problem_formulation}
\begin{figure}[!ht]
\centering
\includegraphics[scale=0.65]{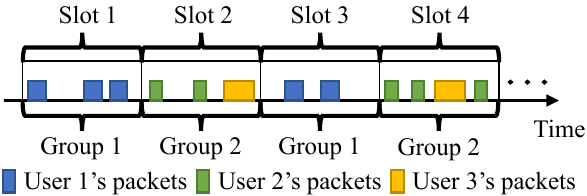}
\caption{Illustration of user grouping in RAW, where $K=3$, $Z=2$, $z_1=1$ and $z_2=z_3=2$.}
\label{fig:raw_slot_and_goup}
\vspace{-0.2cm}
\end{figure}
Fig. \ref{fig:raw_slot_and_goup} illustrates a user grouping scheme in RAW of the Wi-Fi HaLow networks.
We assume that all users and APs are synchronized in time, and the time is slotted in RAW with indices $1,\dots,t,\dots$. 
The duration of each RAW slot is denoted as $\Delta_\mathrm{0}$ in seconds. 
We manage the contention and interference in the network by dividing users into $Z$ groups (for simplicity, we assume that $Z$ is an exponent of $2$).
Then, different groups of users transmit their packets in separate periodical RAW slots.
Specifically, We denote the group of user $1,\dots,K$ as $\mathbf{z}\triangleq[z_1,\dots,z_K]^{\rm T}$, where $z_k$ is user $k$'s group satisfying
\begin{equation}\label{eq:const:z_k_in_Z}
\begin{aligned}
z_k \in\{1,\dots,Z\}\ ,\ \forall k \ ,
\end{aligned}
\end{equation}
Users in $z$-th group are $\{k|z_k=z\}$, and they transmit only in periodical time slots $t=z,z+Z,z+2Z,\dots$.
As a result, contention and interference only happen among users within the same group and are eliminated between any two groups.
For example, in Fig. \ref{fig:raw_slot_and_goup}, three users ($K=3$) are grouped into two groups in RAW ($Z=2$), with user $1$ in group $1$ and user $2$ and $3$ in group $2$. Hence, group $1$ (user $1$) and group $2$ (user $2$ and $3$) transmit in slots $1,3,5\dots$ and $2,4,6\dots$, respectively. This user grouping scheme eliminates the contention and interference between user $1$ and $2$ and between user $1$ and $3$.
We use $u_k(t)$ to denote the number of packets successfully transmitted by the $k$-th user within RAW slot $t$. 
Note that $u_k(t)=0$ if $t\neq z_k,z_k+Z,z_k+2Z,\dots$ because users cannot transmit packets in slots other than the slot assigned to them. 
We denote the throughput of user $1,\dots,K$ as $\mathbf{r}\triangleq[r_1,\dots,r_K]^{\rm T}$, where $r_k$ is defined as
\begin{equation}\label{eq:asymptotic_average_user_throughput}
\begin{aligned}
r_k
&\triangleq \lim_{T\to \infty} \frac{1}{T} \sum_{t=1}^{T} u_k(t) \ , \forall k ,
\end{aligned}
\end{equation}
which is the average number of packets transmitted over time.  

We write $\expt[r_k|\mathbf{S},\mathbf{z}]$, $\forall k$, as the expected throughput of user $k$ for given network states $\mathbf{S}$ and grouping decisions $\mathbf{z}$.
We aim to maximize the network performance objective as the worst-case user throughput by controlling $\mathbf{z}$ as
\begin{equation}\label{eq:prob:user_grouping}
\begin{aligned}
\max_{\mathbf{z}}     \min_{k} \expt[r_k|\mathbf{S},\mathbf{z} ] \ , \  \text{s.t. } \eqref{eq:const:z_k_in_Z}\ ,
\end{aligned}
\end{equation}
where $\min_k \expt[r_k|\mathbf{S},\mathbf{z} ]$ is the worst-case user throughput for given $\mathbf{S}$ and $\mathbf{z}$. 
Note that we assume that users' locations are static in the above formulation, where the network states $\mathbf{S}$ (or the user-to-AP path losses) do not change over time. Therefore, we can asymptotically average the user throughput $\mathbf{r}$ in \eqref{eq:asymptotic_average_user_throughput} for given grouping decisions $\mathbf{z}$.
We will study our methods for solving the user grouping problem under this assumption in Sections \ref{sec:graph_representation_framework} and \ref{sec:ac_grl}, where $\mathbf{z}$ is decided at the initialization of the static network and remains the same over time. Further, we will design an online architecture in Section \ref{sec:online_arch} to periodically update $\mathbf{z}$ according to online measured user throughput and path losses that may change over time due to user mobility.

The straightforward problem formulation of user grouping in \eqref{eq:prob:user_grouping} is challenging to solve because no clear structure shows how to map the given network states $\mathbf{S}$ and the grouping decisions $\mathbf{z}$ to the user throughput $\mathbf{r}$. The models \cite{bianchi2000performance,garetto2008modeling} can establish the mapping in $\mathbf{S}$, $\mathbf{z}$ and $\mathbf{r}$ \cite{chang2018traffic,chen2022energy}. However, this approach requires full information on the path losses between each user pair to determine contending/hidden users whose measurements introduce substantial overheads in the network. Moreover, even if there is full information on contending/hidden users, these models are non-linear and have no closed-form expressions \cite{bianchi2000performance,garetto2008modeling}, which can hardly be applied for optimizing user grouping decisions. Therefore, we propose a new framework for user grouping below.

\section{Graph Representation Optimization for User Grouping in Wireless Networks}\label{sec:graph_representation_framework}
This section presents the framework that formulates the user grouping problem as the graph construction problem that optimizes a graph to represent the wireless network. 
Since the contention and interference are user-pairwise interactions in the network, they can be modeled as an edge connection between a user pair to represent their effects on the network performance \cite{chen2022energy,chang2009multicell,liang2018graph}, e.g., throughput. Each edge can be designed based on user-pairwise network states, i.e., $\mathbf{s}_i$ and $\mathbf{s}_j$, $i\neq j$ and users can be grouped by cutting some of the edges between users. By using a specific edge-cut mechanism (e.g., the max-cut), we only need to optimize the representation of the edges according to the user-pairwise states. This creates a structure that links network states with the edges and further with the grouping decisions, where we can eventually control the decisions by optimizing the edge design.

We model the network as a fully connected directed graph as $\mathcal{G} = (\mathcal{V},\mathcal{E})$, where $\mathcal{V}$ and $\mathcal{E}$ are the set of vertices representing users and the set of edges representing the contention and interference between users, respectively. They are defined as
\begin{equation}
\begin{aligned}
\mathcal{V}\triangleq\{1,\dots,K\} \ ,\  \mathcal{E}\triangleq\{(i,j)|\forall i,j \in \mathcal{V}, \ i\neq j\}\ ,
\end{aligned}
\end{equation}
where $(i,j)$ is the edge from the $i$-th to the $j$-th user.
The edge direction from $i$ to $j$ expresses the contention and interference that user $i$'s transmissions cause to user $j$. 
Note that the impacts of interference and contention are different, e.g., decreasing the SINR versus increasing the backoff time.
However, since contending/hidden users are not measured in the network, we cannot explicitly differentiate the impact of contention and interference. Also, as there is only one performance metric, i.e., throughput, we aggregate interference and contention's impact on throughput as one scalar on each edge.
Specifically, each edge has a weight $W_{i,j}$ representing how the contention and interference caused by user $i$ negatively impact user $j$'s throughput.
We assume that each edge weight is a bounded non-negative real number, i.e., $W_{i,j}\in[0,1]$ (Without loss of generality, we assume that the upper bound of weights is $1$), $\forall i,j=1,\dots,K$ and $i\neq j$.
Here, the larger $W_{i,j}$ implies that user $j$'s throughput is more negatively affected by user $i$'s transmissions.
Note that $W_{i,j}$ is generally not equal to $W_{j,i}$ due to differences in users' packet duration and path losses to APs.
We collect all edge weights in the graph's adjacency matrix as
\begin{equation}\label{eq:defi:graph_adjacency_matrix}
\begin{aligned}
\mathbf{W} \triangleq \big[W_{i,j}\big|W_{i,j}\in[0,1],\forall i\neq j;W_{k,k}=0,\forall k \big] \ .
\end{aligned}
\end{equation}
where $\mathbf{W}$ is a $K\times K$ matrix whose diagonal, $W_{k,k}$, $\forall k$, is equal to $0$, i.e., the graph has no self-loop edge on vertices.

We can obtain the user grouping decisions by the max cut on the above graph as 
\begin{equation}\label{eq:prob:user_grouping:max_cut}
\begin{aligned}
\max_{\mathbf{z}} \sum_{i\neq j} W_{i,j} \mathbf{1}_{\{z_i\neq z_j\}}, \  \text{s.t. }\ \eqref{eq:const:z_k_in_Z}\ ,
\end{aligned}
\end{equation}
which maximizes the contention and interference, represented by edge weights, eliminated between different groups. Here, $\mathbf{1}_{\{z_i\neq z_j\}}$ in \eqref{eq:prob:user_grouping:max_cut} is the indicator function that equals to $1$ if user $i$ and $j$ are assigned to different groups or otherwise equals to $0$. In the above max cut problem, the higher the edge weights between two users, the more likely they are divided into two groups, leading to minimum contention and interference within each group. 
Based on this fact, we reformulate the user grouping problem in \eqref{eq:prob:user_grouping} as a bi-level optimization problem as
\begin{equation}\label{eq:prob:user_grouping:graph_cut:adaptive_edge_weighting}
\begin{aligned}
\max_{W_{i,j},\forall i\neq j}    &\ \min_k \expt[r_k|\mathbf{S},\mathbf{z}]\ ,      \\ 
                \textrm{s.t.}   &\ \  W_{i,j}\in[0,1] , \  \forall i\neq j \ , \\  
                                &\ \ \mathbf{z}=\arg\max_{\mathbf{z}' }\  \sum_{i\neq j} W_{i,j} \mathbf{1}_{\{z'_i\neq z'_j\}} , \ \text{s.t. } \eqref{eq:const:z_k_in_Z} .
\end{aligned}
\end{equation}
where the max cut problem in \eqref{eq:prob:user_grouping:max_cut} is embedded as the lower-level problem (LLP), and $\mathbf{z}$ is decided as the solution of the LLP.
We can show that
\begin{lemma}\label{lemma:equivalence_of_user_grouping_problem}
Define the optimal edge weights as $W^*_{i,j}$, $\forall i\neq j$, that maximize the network performance objective in \eqref{eq:prob:user_grouping:graph_cut:adaptive_edge_weighting}, and define
$\mathbf{z}^*$ as the optimal solution of the LLP of \eqref{eq:prob:user_grouping:graph_cut:adaptive_edge_weighting} for the above optimal edge weights. Then, $\mathbf{z}^*$ also maximizes the objective in \eqref{eq:prob:user_grouping}.
\end{lemma}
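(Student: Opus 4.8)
The plan is to prove that \eqref{eq:prob:user_grouping:graph_cut:adaptive_edge_weighting} is an exact reformulation of \eqref{eq:prob:user_grouping} in the sense that the two problems share the same optimal objective value, and then read off the claim. Write $V^*$ for the optimal value of \eqref{eq:prob:user_grouping} and $U^*$ for the optimal value of \eqref{eq:prob:user_grouping:graph_cut:adaptive_edge_weighting}. I would establish $U^*=V^*$ via two inequalities, and then use that $\mathbf{z}^*$ is simultaneously feasible for \eqref{eq:prob:user_grouping} and attains $U^*$.

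\emph{Upper bound $U^*\le V^*$.} This direction is immediate: for any edge weights with $W_{i,j}\in[0,1]$, the corresponding lower-level optimizer $\mathbf{z}$ satisfies the membership constraint \eqref{eq:const:z_k_in_Z}, so $\mathbf{z}$ is feasible for \eqref{eq:prob:user_grouping} and hence $\min_k \expt[r_k|\mathbf{S},\mathbf{z}]\le V^*$. Taking the maximum of the left-hand side over all admissible edge weights yields $U^*\le V^*$.

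\emph{Lower bound $U^*\ge V^*$.} Here I would construct an explicit admissible weight assignment whose max cut recovers an optimizer of \eqref{eq:prob:user_grouping}. Let $\bar{\mathbf{z}}$ be any optimizer of \eqref{eq:prob:user_grouping}, and set $\bar{W}_{i,j}\triangleq\mathbf{1}_{\{\bar{z}_i\neq \bar{z}_j\}}$ for $i\neq j$; these weights lie in $[0,1]$, hence are feasible. For any partition $\mathbf{z}'$ the induced cut value equals the number of ordered pairs $(i,j)$ separated simultaneously by $\bar{\mathbf{z}}$ and by $\mathbf{z}'$, which cannot exceed the number of pairs separated by $\bar{\mathbf{z}}$ alone; the partition $\bar{\mathbf{z}}$ attains that upper bound, so $\bar{\mathbf{z}}$ solves the lower-level max cut for the weights $\bar{W}_{i,j}$. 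Evaluating the upper-level objective at this feasible point gives $\min_k \expt[r_k|\mathbf{S},\bar{\mathbf{z}}]=V^*$, hence $U^*\ge V^*$.

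Combining the two bounds yields $U^*=V^*$, and since $\mathbf{z}^*$ is the lower-level optimizer for the optimal weights $W^*_{i,j}$, it is feasible for \eqref{eq:prob:user_grouping} and achieves $\min_k \expt[r_k|\mathbf{S},\mathbf{z}^*]=U^*=V^*$, i.e., it maximizes \eqref{eq:prob:user_grouping}. The step needing the most care is the lower-bound direction: the max cut induced by $\bar{W}_{i,j}$ is in general not unique — every partition that refines $\bar{\mathbf{z}}$ while using at most $Z$ labels attains the same cut value. I would handle this by noting that such a refinement separates a superset of the user pairs separated by $\bar{\mathbf{z}}$, so it can only remove contention and interference while leaving every group's periodic slot share equal to $1/Z$; therefore its worst-case throughput is no smaller than that of $\bar{\mathbf{z}}$, and, being itself feasible for \eqref{eq:prob:user_grouping}, it must also attain $V^*$. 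Hence every lower-level optimizer for the constructed weights is an optimizer of \eqref{eq:prob:user_grouping}, and the same argument applied to the genuinely optimal weights shows that $\mathbf{z}^*$ is one. (Alternatively, under the optimistic reading of the bi-level program one may simply select $\bar{\mathbf{z}}$ itself among the tied lower-level optima, which removes the need for the monotonicity argument in the lower-bound step, although the tie-handling remark is still required to conclude that a generic $\mathbf{z}^*$ is optimal.)
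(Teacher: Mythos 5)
Your proof is correct and rests on the same key construction as the paper's: given an optimizer $\bar{\mathbf{z}}$ of \eqref{eq:prob:user_grouping}, set $\bar{W}_{i,j}=\mathbf{1}_{\{\bar{z}_i\neq \bar{z}_j\}}$ so that $\bar{\mathbf{z}}$ solves the lower-level max cut, which shows the bi-level problem \eqref{eq:prob:user_grouping:graph_cut:adaptive_edge_weighting} attains the optimal value of \eqref{eq:prob:user_grouping}. The paper packages this as a proof by contradiction (assuming $\mathbf{z}^*$ is suboptimal and deriving that its objective is in fact no smaller than that of the presumed better $\mathbf{z}'$), whereas you present it as a direct sandwich $U^*\le V^*\le U^*$; these are logically interchangeable, and your upper-bound direction makes explicit the feasibility observation the paper leaves implicit. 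The one substantive difference is your treatment of ties in the lower-level max cut: the paper simply asserts that $\mathbf{z}'$ ``is the optimal solution'' of the LLP for the constructed weights, silently adopting the optimistic bi-level convention, while you correctly note that any partition separating a superset of the pairs separated by $\bar{\mathbf{z}}$ attains the same cut value, and you either invoke a monotonicity property of the throughput under refinement or fall back on the optimistic selection. That extra care is warranted (the monotonicity claim is physically plausible but not formally established by the system model, so the optimistic-selection fallback is the safer of your two options), and it addresses a gap that the paper's own proof does not acknowledge.
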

\begin{proof}
The proof is in the appendix.
\end{proof}
The above statement implies that we can find the optimal grouping decisions in \eqref{eq:prob:user_grouping} by solving the problem \eqref{eq:prob:user_grouping:graph_cut:adaptive_edge_weighting} instead. 
Also, note that the grouping decisions in \eqref{eq:prob:user_grouping:graph_cut:adaptive_edge_weighting} only depend on the edge weights, as shown in the LLP of \eqref{eq:prob:user_grouping:graph_cut:adaptive_edge_weighting}.
As a result, the formulation in this section transforms the user grouping problem into a problem that finds the optimal edge weights in the graph representing the network. 
Furthermore, we assume that each of the optimal edge weights in the graph, $W^*_{i,j}$, is a function of user $i$ and user $j$' states, $\mathbf{s}_i$ and $\mathbf{s}_j$, as
\begin{equation}\label{eq:assu:user_grouping:w_equal_to_mu_si_sj}
\begin{aligned}
W^*_{i,j} = \mu^*(\mathbf{s}_i,\mathbf{s}_j) \ , \forall i\neq j \ ,
\end{aligned}
\end{equation}
where $\mu^*(\cdot)$ is referred to as the optimal graph-constructing function. It maps each user pair's states to the optimal edge weight between them. 

Based on the formulation in \eqref{eq:prob:user_grouping:graph_cut:adaptive_edge_weighting} and the assumption in \eqref{eq:assu:user_grouping:w_equal_to_mu_si_sj}, the user grouping problem in \eqref{eq:prob:user_grouping} can be solved by finding the optimal graph-constructing function that uses user-pairwise states to generate the graph's edge weights. 
We note that the existing graph-based methods \cite{chen2022energy,chang2009multicell,liang2018graph} provide some manual designs of the graph-constructing function in the proposed framework.
For example, work in \cite{chen2022energy} designs the function as $\mu(\mathbf{s}_i,\mathbf{s}_j)=\mathbf{1}_{\{\hat{a}(i)= \hat{a}(j)\}}, \forall i,j$, to indicate whether two users are associated with the same AP ($\hat{a}(k),\forall k$ is computed using $\mathbf{s}_i$ and $\mathbf{s}_j$ as \eqref{eq:associated_ap_id}), or work in \cite{liang2018graph} designs the function as $\mu(\mathbf{s}_i,\mathbf{s}_j)=s_{i,\hat{a}(j)}, \forall i,j$, to set the edge as the path loss from one user to another user's associated AP. However, because these heuristic methods use a fixed function, their output edge weights cannot be adjusted to reflect the contention and interference's impact on a given performance objective, e.g., avoiding throughput starvation.
To address this issue, we propose approximating the optimal graph-constructing function as a NN whose parameters can be flexibly trained by ML. We study the design of graph-constructing NN and its training algorithm in the next.

\section{Actor-Critic Graph Representation Learning}\label{sec:ac_grl}
\begin{figure}[!ht]
\centering
\includegraphics[scale=0.65]{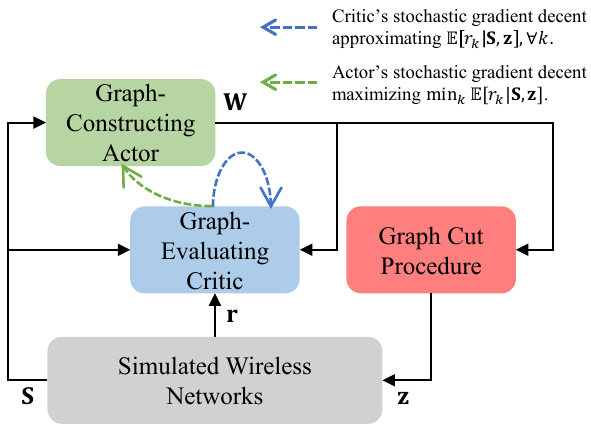}
\caption{Overall structure of the AC-GRL algorithm.}
\label{fig:actor_critic_algorithm}
\vspace{-0.2cm}
\end{figure}

In this section, we develop the AC-GRL algorithm that trains the NN, referred to as the graph-constructing actor, to construct optimal edge weights in the graph \cite{karasuyama2017adaptive}. Note that optimal edge weights maximize the worst-case user throughput in \eqref{eq:prob:user_grouping:graph_cut:adaptive_edge_weighting} for given network states.
Training the actor requires the gradients of the actor's output with respect to the specific objective value. However, traditional model-based approaches cannot map the actor's edge weight outputs (or the grouping decisions) to the user throughput in the objective of \eqref{eq:prob:user_grouping:graph_cut:adaptive_edge_weighting} because of unmeasured contending/hidden users and the lack of closed-form expressions, as mentioned in Section \ref{sec:wifi_system_model}. This issue is solved by using an additional NN, referred to as the graph-evaluating critic, to approximate the user throughput for the graph's edge weights and network states. Consequently, our NN training algorithm has an actor-critic form.

Fig. \ref{fig:actor_critic_algorithm} shows the structure of the AC-GRL algorithm with 1) the graph-constructing actor that uses network states $\mathbf{S}$ to infer probabilities of hidden users and generate the graph's edge weights collected in the adjacency matrix $\mathbf{W}$. 2) the graph cut procedure that performs the graph's max cut to generate the user grouping decisions $\mathbf{z}$ according to the LLP of \eqref{eq:prob:user_grouping:graph_cut:adaptive_edge_weighting} with given $\mathbf{W}$, and 3) the graph-evaluating critic that performs the same hidden user inference as the actor and further evaluates how good is the graph for the given $\mathbf{W}$ and $\mathbf{S}$ based on measured user throughput $\mathbf{r}$. The remaining section first presents the design of the above three components in Sections \ref{subsec:actor_nn_structure}, \ref{subsec:GM_graph_cut} and \ref{subsec:critic_nn_structure} and then explains the flow of the AC-GRL algorithm in Section \ref{subsec:grl_algorithm_flow}.

\subsection{Design of the Graph-Constructing Actor}\label{subsec:actor_nn_structure}
\begin{figure}[!ht]
\centering
\includegraphics[scale=0.65]{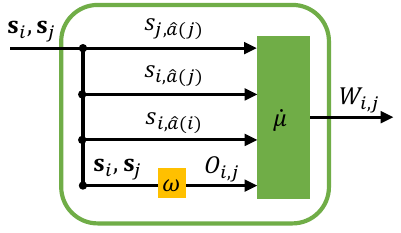}
\caption{The structure of the actor.}
\label{fig:actor_nn_arch}
\vspace{-0.2cm}
\end{figure}

We use a NN $\mu(\cdot|\theta^\mu)$ with trainable parameters $\theta^\mu$ to approximate $\mu^*(\cdot)$ in \eqref{eq:assu:user_grouping:w_equal_to_mu_si_sj}, i.e., the optimal mapping from the users' states $\mathbf{S}=[\mathbf{s}_{1},\dots,\mathbf{s}_{K}]$ to optimal edge weights, as
\begin{equation}\label{eq:actor_definition}
\begin{aligned}
W_{i,j} \triangleq \mu(\mathbf{s}_i,\mathbf{s}_j|\theta^\mu) \approx \mu^*(\mathbf{s}_i,\mathbf{s}_j) \ , \forall i \neq j \ .
\end{aligned}
\end{equation}
where $\mu(\cdot|\theta^\mu)$ is the graph-constructing actor. In the actor, we pre-process the network states and extract key information on contention and interference to help the actor decide the edge weight values. Specifically, for a given pair of users, $i$ and $j$, $i\neq j$, we train the actor to set $W_{i,j}$' value to differentiate how negatively user $i$'s transmissions affect user $j$'s throughput according to the following information in $\mathbf{s}_i$ and $\mathbf{s}_j$: 
\begin{itemize}
    \item $s_{j,\hat{a}(j)}$, the path loss from user $j$ to its associated AP $\hat{a}(j)$. This feature determines the receiving signal power of user $j$ at the AP. 
    As previously mentioned, it also determines user $j$'s packet duration. 
    A smaller value of this feature corresponds to a shorter packet duration and a stronger signal strength for user $j$. 
    This means that user $j$'s transmissions are more robust to contention and interference. Thus, the well-trained actor should set $W_{i,j}$ to a relatively smaller value when $s_{j,\hat{a}(j)}$ is larger.
    \item $s_{i,\hat{a}(j)}$, the path loss from user $i$ to user $j$'s associated AP $\hat{a}(j)$, This value indicates the interference power at the AP. 
    The smaller $s_{i,\hat{a}(j)}$ suggests that $W_{i,j}$ should be set to a higher value by the well-trained actor, accounting for larger interference strength.
    \item $s_{i,\hat{a}(i)}$, the path loss from user $i$ to its associated AP $\hat{a}(i)$, which determines the packets' duration sent by user $i$, i.e., the interference duration caused by user $i$. 
    The smaller $s_{i,\hat{a}(i)}$ indicates that user $i$'s transmissions have a shorter duration, resulting in a shorter interference or contention duration to user $j$ when they are hidden or contending users, respectively. 
    Thus, the smaller $s_{i,\hat{a}(i)}$ suggests the well-trained actor should set $W_{i,j}$ to a smaller value.   
    \item Whether user $j$ can sense user $i$'s transmissions, i.e., whether user $i$ is contending with or hidden from user $j$. Since this information is not measured in the network states, we use a FNN $\omega(\cdot|\theta^\omega)$ with parameters $\theta^\omega$ to infer it from the measured network states as
    \begin{equation}\label{eq:infer_hidden_target}
    \begin{aligned}
    &O_{i,j} 
    \triangleq \omega(\mathbf{s}_i, \mathbf{s}_j|\theta^\omega)\\
    &\approx \Pr\big[\text{user $j$ senses user $i$'s transmissions}\big| \mathbf{s}_i, \mathbf{s}_j\big] \ ,
    \end{aligned}
    \end{equation}
    where $1-O_{i,j}$ indicates how likely user $i$ is hidden from user $j$ for given $\mathbf{s}_i$ and $\mathbf{s}_j$. We refer to $\omega(\cdot|\theta^\omega)$ as the inference NN. $O_{i,j}$ informs the actor how likely user $i$ and $j$ are contending or hidden users, and the actor should combine $O_{i,j}$ with the previously extracted information, i.e., $s_{j,\hat{a}(j)}$, $s_{i,\hat{a}(j)}$ and $s_{i,\hat{a}(i)}$, when deciding $W_{i,j}$.  
\end{itemize}
Based on the above pre-processing, we design the structure of the actor in \eqref{eq:actor_definition} based on the extracted contention and interference information as 
\begin{equation}\label{eq:actor_nn_structure}
\begin{aligned}
&W_{i,j} 
= \mu(\mathbf{s}_i,\mathbf{s}_j|\theta^\mu) = \dot{\mu}(s_{j,\hat{a}(j)},s_{i,\hat{a}(j)},s_{i,\hat{a}(i)},O_{i,j}|\theta^{\dot{\mu} }) \\
= &\dot{\mu}\big(s_{j,\hat{a}(j)},s_{i,\hat{a}(j)},s_{i,\hat{a}(i)},\omega(\mathbf{s}_i, \mathbf{s}_j|\theta^\omega)\big|\theta^{\dot{\mu} }\big) \  , \ \forall i\neq j\ ,
\end{aligned}
\end{equation}
where $\omega(\cdot|\theta^\omega)$ is considered as a part of the actor and $\dot{\mu}(\cdot|\theta^{\dot{\mu}})$ is also designed as a FNN. The above actor-generated edge weights are collected and returned as the graph's adjacency matrix $\mathbf{W}$ (with all $0$ diagonal elements, as defined in \eqref{eq:defi:graph_adjacency_matrix}).

We discuss how edge weight values affect the grouping decisions generated from the graph's max cut. Clearly, if two users are heavily interfered or contending with each other, larger edge weights will be set between them and the users are more likely separated into different groups. Meanwhile, if a user receives weak interference from (or barely contends with) its neighboring users, small edge weights will be assigned between them. In this case, this user is more likely separated from its neighboring users when the sum of its edge weights is large (implying a significant aggregated interference or contention). Also, when all edge weights are small values, e.g., in sparse networks with little interference or contention, the graph's max cut will balance the number of users in each group by cutting the edges with relatively larger weights.

\subsection{Design of the Graph Cut Procedure}\label{subsec:GM_graph_cut}
\begin{figure}[!ht]
\centering
\includegraphics[scale=0.65]{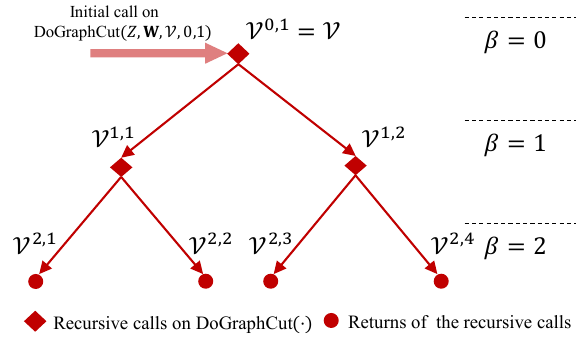}
\caption{Tree diagram illustrating the recursive graph cut when $Z=4$ and $\beta=0,1,\dots,\log_2(Z)$.}
\label{fig:recur_cut}
\vspace{-0.2cm}
\end{figure}
The graph cut procedure solves the graph's max cut problem in \eqref{eq:prob:user_grouping:max_cut} for given actor-generated edge weights $\mathbf{W}$. Specifically, it cuts the graph $\mathcal{G}$ into $Z$ parts and equivalently groups the graph's vertices (or the users) $\mathcal{V}$ into $Z$ groups.
As $Z$ is assumed to be a power of $2$, as mentioned in Section \ref{subsec:user_grouping_problem_formulation}, we can cut the graph $\mathcal{G}$ recursively by first dividing $\mathcal{V}$ into two subsets and repeatedly dividing each subset into two until there are $Z$ subsets. The division of each (sub)set aims to disconnect the edges with higher weights by solving the max cut within the (sub)set.
Fig.~\ref{fig:recur_cut} illustrates a tree diagram of the graph cut process with $\log_2(Z)+1$ levels of graph cut. In the $\beta$-th level, the subsets of users are $\mathcal{V}^{\beta,c}$, where $c = 1,\dots,2^\beta$ and $\beta = 0,\dots,\log_2(Z)$. The users in $\mathcal{V}^{\beta,c}$ are
\begin{equation}
\begin{aligned}
\mathcal{V}^{\beta,c} \triangleq \{k^{\beta,c}_1,\dots,k^{\beta,c}_{|\mathcal{V}^{\beta,c}|}\} \ , \forall \beta, c \ ,
\end{aligned}
\end{equation}
which are further divided into two subsets, $\mathcal{V}^{\beta+1,2c-1}$ and $\mathcal{V}^{\beta+1,2c}$, in the next level.
The edge weights between vertices in $\mathcal{V}^{\beta,c}$ are collected as $\mathbf{W}^{\beta,c}$, a $|\mathcal{V}^{\beta,c}|\times|\mathcal{V}^{\beta,c}|$ matrix whose elements are determined by the original graph's adjacency matrix $\mathbf{W}$ as
\begin{equation}\label{eq:w_setup}
\begin{aligned}
\mathbf{W}^{\beta,c} \triangleq \big[W^{\beta,c}_{i,j}\big| W^{\beta,c}_{i,j}= W_{k^{\beta,c}_i,k^{\beta,c}_j} \ , \ \forall k^{\beta,c}_i, k^{\beta,c}_j \in  \mathcal{V}^{\beta,c}\big] \ .
\end{aligned}
\end{equation}

We use Goemans and Williamson's method \cite{goemans1995improved} to divide a given (sub)set of vertices.
In detail, let us define an indicator $y^{\beta,c}_i$ for each user $k^{\beta,c}_i$ in $\mathcal{V}^{\beta,c}$, where $y^{\beta,c}_i=-1$ if user $k^{\beta,c}_i$ is divided into $\mathcal{V}^{\beta+1,2c-1}$ and otherwise, $y^{\beta,c}_i=+1$, i.e., it is divided into $\mathcal{V}^{\beta+1,2c}$.
Then,  we can write the sub-max-cut-problem that maximizes the sum of weights on the disconnected edges between $\mathcal{V}^{\beta+1,2c-1}$ and $\mathcal{V}^{\beta+1,2c}$ as
\begin{equation}\label{eq:prob:sub_cut}
\begin{aligned}
\max_{\mathbf{y}^{\beta,c}}&\ \sum_{i\neq j} W^{\beta,c}_{i,j} \frac{1-y^{\beta,c}_iy^{\beta,c}_j}{2} \ ,\\
\text{s.t.}&\ y^{\beta,c}_i \in \{ -1, +1\}\ ,\ \forall i = 1,\dots,|\mathcal{V}^{\beta,c}|\ ,
\end{aligned}
\end{equation}
where $\mathbf{y}^{\beta,c}\triangleq[ y^{\beta,c}_{1},\dots,y^{\beta,c}_{|\mathcal{V}^{\beta,c}|} ]^{\mathrm{T}}$. Here, $(1-y^{\beta,c}_iy^{\beta,c}_j)/2$ in \eqref{eq:prob:sub_cut} is equal to $1$ if $k^{\beta,c}_i$ and $k^{\beta,c}_j$ are not in the same subset or $0$ otherwise, indicating whether or not edge $(k^{\beta,c}_i,k^{\beta,c}_j)$ is disconnected.
This problem in \eqref{eq:prob:sub_cut} can be relaxed into a SDP problem \cite{goemans1995improved} as
\begin{equation}\label{eq:prob:sub_cut_sdp}
\begin{aligned}
\hat{\mathbf{X}}^{\beta,c} = \arg\max_{\mathbf{X}^{\beta,c}}&\  \sum_{i\neq j} W^{\beta,c}_{i,j}\frac{1-X^{\beta,c}_{i,j}}{2} \ ,\\
\text{s.t.}&\ , \ \diag\{\mathbf{X}^{\beta,c}\} = 1\ ,\ \mathbf{X}^{\beta,c}\succeq0\ .
\end{aligned}
\end{equation}
Here, $\mathbf{X}^{\beta,c}$ in \eqref{eq:prob:sub_cut_sdp} is a $|\mathcal{V}^{\beta,c}|\times|\mathcal{V}^{\beta,c}|$ matrix whose elements $X^{\beta,c}_{i,j}$ are real numbers that approximate the multiplication $y^{\beta,c}_iy^{\beta,c}_j$ in \eqref{eq:prob:sub_cut}, $\forall i,j$.
The above problem can be solved by existing convex optimization solvers \cite{diamond2016cvxpy}.

After solving the problem in \eqref{eq:prob:sub_cut_sdp}, we can obtain the graph cut indicators $\mathbf{y}^{\beta,c}$ by a rounding method \cite{goemans1995improved}. Specifically, it performs singular value decomposition (SVD) on the optimal solution $\hat{\mathbf{X}}^{\beta,c}$ that maximizes the objective in \eqref{eq:prob:sub_cut_sdp}. Since $\hat{\mathbf{X}}^{\beta,c}$ is positive semidefinite, we can obtain a SVD of $\hat{\mathbf{X}}^{\beta,c}$ in the following structure \cite{strang2006linear},
\begin{equation}\label{eq:svd}
\begin{aligned}
\mathbf{U}^{\beta,c}\mathbf{\Sigma}^{\beta,c} (\mathbf{U}^{\beta,c})^\mathrm{T} = \svd(\hat{\mathbf{X}}^{\beta,c}) \ ,
\end{aligned}
\end{equation}
where $ \mathbf{U}^{\beta,c}$ and $ \mathbf{\Sigma}^{\beta,c}$ are both $|\mathcal{V}^{\beta,c}|\times |\mathcal{V}^{\beta,c}|$ matrices and further $\mathbf{\Sigma}^{\beta,c}$ is a diagonal matrix with non-negative eigenvalues of $\hat{\mathbf{X}}^{\beta,c}$ on its diagonal.
Then, we randomly select a vector in $\mathbb{R}^{|\mathcal{V}^{\beta,c}|}$ as $\delta^{\beta,c}$ and set $\mathbf{y}^{\beta,c}$ as\footnote{Each row of $\mathbf{U}^{\beta,c} (\mathbf{\Sigma}^{\beta,c})^{\frac{1}{2}}$ is a unit vector, and the size of the angle between the $i$-th and $j$-th row indicates how likely user $i$ and $j$ should be assigned in the same subset, $\forall i\neq j$ (smaller the angle, more likely they are in the same subset). Then, by multiplying $\mathbf{U}^{\beta,c} (\mathbf{\Sigma}^{\beta,c})^{\frac{1}{2}}$ with a random vector, users' grouping indicators generated by \eqref{eq:get_y_by_sgn} are more likely to have the same sign if the angle between their corresponding row vectors is smaller.
It has been proved \cite{goemans1995improved} that this method achieves at least $0.87854$ of optimal achievable max cut objective in \eqref{eq:prob:sub_cut}, which provides a near-optimal solution for max cut over given edge weights.}
\begin{equation}\label{eq:get_y_by_sgn}
\begin{aligned}
\mathbf{y}^{\beta,c}  = [ y^{\beta,c}_{1},\dots,y^{\beta,c}_{|\mathcal{V}^{\beta,c}|} ]^{\mathrm{T}} = \sgn\big( \mathbf{U}^{\beta,c} (\mathbf{\Sigma}^{\beta,c})^{\frac{1}{2}} \delta^{\beta,c} \big)\ .
\end{aligned}
\end{equation}
We then configure $\mathcal{V}^{\beta+1,2c-1}$ and $\mathcal{V}^{\beta+1,2c}$ as
\begin{equation}\label{eq:get_v_beta+1}
\begin{aligned}
\mathcal{V}^{\beta+1,2c-1} &= \{k^{\beta,c}_i|y^{\beta,c}_{i} = -1, \ i = 1,\dots,|\mathcal{V}^{\beta,c}|\}\ ,  \\
\mathcal{V}^{\beta+1,2c} &= \{k^{\beta,c}_i|y^{\beta,c}_{i} = +1, \ i = 1,\dots,|\mathcal{V}^{\beta,c}|\}\ .
\end{aligned}
\end{equation}
Next, we repeat the graph cut process on $\mathcal{V}^{\beta+1,2c-1}$ and $\mathcal{V}^{\beta+1,2c}$. Finally, when the process at the $\log_2(Z)$-th level is done, we have $Z$ disjoint groups, $\mathcal{V}^{\beta,1}\dots,\mathcal{V}^{\beta,Z}$, where $\beta=\log_2(Z)$, and the grouping decisions $\mathbf{z}$ is obtained as
\begin{equation}\label{eq:get_z}
\begin{aligned}
z_k = c , \ \forall k \in \mathcal{V}^{\beta,c},\ \beta = \log_2(Z),\ \forall c=1,\dots,Z. 
\end{aligned}
\end{equation}

Algorithm \ref{alg:recur_cut} summarizes the recursive graph cut procedure, namely $\mathrm{DoGraphCut}(\cdot)$. Its initial call at the root of the tree diagram is $\mathrm{DoGraphCut}(Z,\mathbf{W},\mathcal{V}'=\mathcal{V},\beta=0,c=1)$, as shown in Fig. \ref{fig:recur_cut}. Here, $Z$, $\mathbf{W}$ and $\mathcal{V}$ are the number of groups required, the adjacency matrix and the vertices of $\mathcal{G}$, respectively, as defined before. We simply denote the user grouping decisions computed based on this procedure as
\begin{equation}\label{eq:z_by_DoGraphCut}
\begin{aligned}
\mathbf{z} = \mathrm{DoGraphCut}(Z,\mathbf{W},\mathcal{V},0,1) \ .
\end{aligned}
\end{equation}

\begin{algorithm}[!t]
\caption{Recursive Graph Cut Procedure}\label{alg:recur_cut}
\begin{algorithmic}[1]
\PROCEDURE{$\mathrm{DoGraphCut}$}{$Z,\mathbf{W},\mathcal{V}',\beta,c$}
\IF{$\beta=\log_2(Z)$} 
\STATE Set $z_k = c $, $\forall k \in \mathcal{V}'$.
\ELSE
\STATE Set $\mathbf{W}^{\beta,c}$ as \eqref{eq:w_setup}, where $\mathcal{V}^{\beta,c}=\mathcal{V}'$.
\STATE Construct the problem in \eqref{eq:prob:sub_cut_sdp} and find $\hat{\mathbf{X}}^{\beta,c}$.
\STATE Compute $\mathcal{V}^{\beta+1,2c-1}$ and $\mathcal{V}^{\beta+1,2c}$ as \eqref{eq:svd}\eqref{eq:get_y_by_sgn}\eqref{eq:get_v_beta+1}.
\STATE $\mathrm{DoGraphCut}(Z,\mathbf{W},\mathcal{V}^{\beta+1,2c-1},\beta+1,2c-1)$.
\STATE $\mathrm{DoGraphCut}(Z,\mathbf{W},\mathcal{V}^{\beta+1,2c},\beta+1,2c)$.
\ENDIF
\RETURN
\ENDPROCEDURE
\end{algorithmic}
\end{algorithm}

\subsection{Design of the Graph-Evaluating Critic}\label{subsec:critic_nn_structure}
\begin{figure}[ht]
\begin{subfigure}[b]{1\columnwidth}
\centering
\includegraphics[scale=0.65]{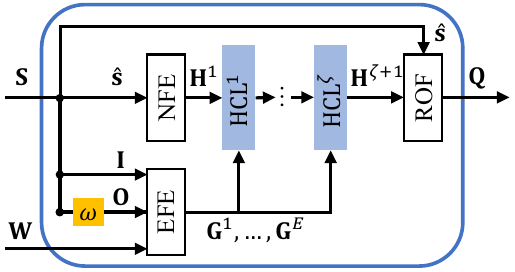}
\caption{The overall structure of the critic.}
\label{subfig:gnn_critic_nn_arch}
\end{subfigure}\\
\begin{subfigure}[b]{1\columnwidth}
\centering
\includegraphics[scale=0.65]{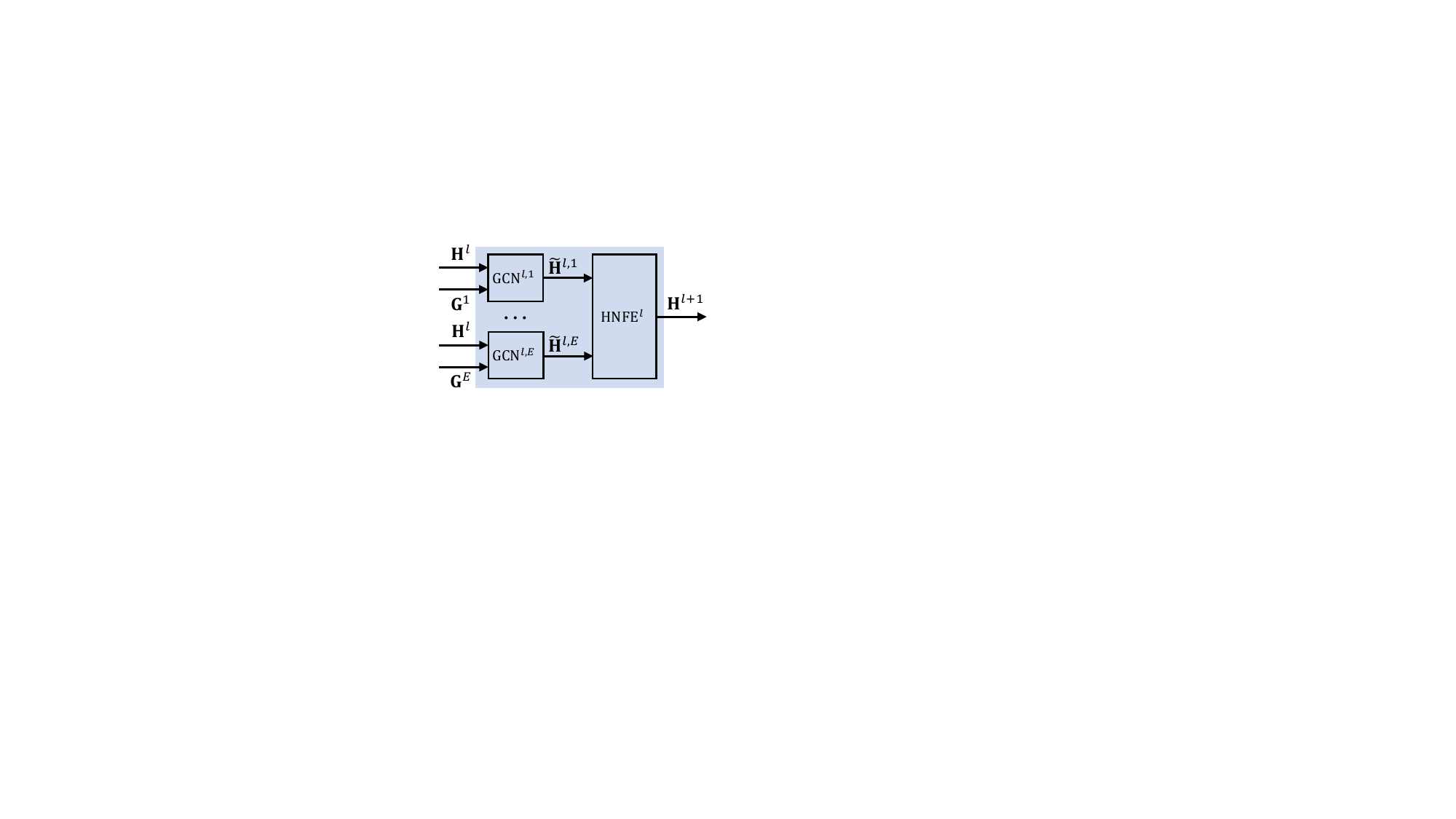}
\caption{The internal structure of HCL in each layer of the critic.}
\label{subfig:gnn_critic_hidden_layer}
\end{subfigure}
\caption{The structure of the critic.}
\vspace{-0.2cm}
\end{figure}
The critic $\mathbf{Q}(\cdot|\theta^Q)$ is a NN that approximates the expected value of the user throughput $\mathbf{r}$ for the given network states $\mathbf{S}$ and the given edge weights $\mathbf{W}$ in the graph as
\begin{equation}
\begin{aligned}
&\expt[\mathbf{r}|\mathbf{S},\mathbf{z}] \\
=       &\expt\big[[ r_{1},\dots,r_{K} ]^{\mathrm{T}}|\mathbf{S},\mathrm{DoGraphCut}(Z,\mathbf{W},\mathcal{V},0,1)\big] \\  
\approx & \mathbf{Q}(\mathbf{S},\mathbf{W}|\theta^Q) 
=[Q_1(\mathbf{S},\mathbf{W}|\theta^Q),\dots,Q_K(\mathbf{S},\mathbf{W}|\theta^Q)]^{\mathrm{T}} \ ,
\end{aligned}
\end{equation}
where $\mathbf{z}$ is the grouping decisions generated based on \eqref{eq:z_by_DoGraphCut} and $Q_k(\mathbf{S},\mathbf{W}|\theta^Q)$ is the $k$-th element in the output of $\mathbf{Q}(\mathbf{S},\mathbf{W}|\theta^Q)$ that approximates $r_k$, $\forall k$.
The overall structure of the critic is shown in Fig. \ref{subfig:gnn_critic_nn_arch}. 
To help the critic abstract information on the contention and interference from the network states, we apply the same pre-processing on the network states in the actor's design from Section \ref{subsec:actor_nn_structure}. 
Note that we define the vector and matrices to collect all users' pre-processed states and to simplify the notations as
\begin{equation}\label{eq:defi:user_states_reformulation_matrix_vector}
\begin{aligned}
\hat{\mathbf{s}} &\triangleq [s_{1,\hat{a}(1)},\dots,s_{K,\hat{a}(K)}] \ , \\
\mathbf{I} &\triangleq \big[I_{i,j}\big| I_{i,j}= s_{i,\hat{a}(j)} , \forall i\neq j; I_{k,k}=0,\forall k \big] \ , \\
\mathbf{O} &\triangleq \big[O_{i,j}\big| O_{i,j}= \omega(\mathbf{s}_i, \mathbf{s}_j|\theta^\omega) , \forall i\neq j; O_{k,k}=0,\forall k \big] \ .
\end{aligned}
\end{equation}
Here, $\hat{\mathbf{s}}$ contains all path losses from users to their associated APs, determining the receiving signal power and the packet duration of users.
$\mathbf{I}$ contains path losses from each user to all other users' associated APs, indicating the interference power.
$\mathbf{O}$ indicates how likely each pair of users can sense each other. Based on the above state pre-processing, the critic has a structure as
\begin{equation}\label{eq:critic_nn_structure}
\begin{aligned}
\mathbf{Q}(\mathbf{S},\mathbf{W}|\theta^Q) = \dot{\mathbf{Q}}(\hat{\mathbf{s}},\mathbf{I},\mathbf{O},\mathbf{W}|\theta^{\dot{Q}})|_{O_{i,j} = \omega(\mathbf{s}_i, \mathbf{s}_j|\theta^\omega),\forall i\neq j} \ ,
\end{aligned}
\end{equation}
where $\hat{\mathbf{s}}$ and $\mathbf{I}$ can be directly taken from $\mathbf{S}$, while $\mathbf{O}$ is computed based on $\mathbf{S}$ by using the inference NN $\omega(\cdot|\theta^\omega)$ defined in \eqref{eq:infer_hidden_target}. 
We note that $\omega(\cdot|\theta^\omega)$ is a shared part of both the actor and the critic.

The structure of $\dot{\mathbf{Q}}(\cdot)$ is designed based on GNNs that are flexible to the dimensions of its inputs, $\hat{\mathbf{s}}$, $\mathbf{I}$, $\mathbf{O}$ and $\mathbf{W}$, as follows.
First, we use a FNN, referred to as the edge feature embedder (EFE), to embed the user-pairwise state information, $\mathbf{I}$ and $\mathbf{O}$, and the actor-generated edge weights $\mathbf{W}$ into $E$ embedded edge features, e.g., $\mathbf{G}^{1},\dots,\mathbf{G}^{E}$, whose the $(i,j)$-th elements are
\begin{equation}\label{eq:efe}
\begin{aligned}
\relax [G^{1}_{i,j},\dots,G^{E}_{i,j}]^\mathrm{T} = \mathrm{EFE}(I_{i,j},O_{i,j},W_{i.j}) , \forall i\neq j \ , 
\end{aligned}
\end{equation}
and we set $G^{e}_{k,k}=0$, $\forall e,k$.
Also, we use a FNN, namely the node feature embedder (NFE), to embed the per-user-wise state information $\hat{\mathbf{s}}$ into a higher dimension as 
\begin{equation}\label{eq:nfe}
\begin{aligned}
\mathbf{H}^{1}\triangleq [\mathbf{h}^{1}_1,\dots,\mathbf{h}^{1}_K]^\mathrm{T} = [\mathrm{NFE}(s_{1,\hat{a}(1)}),\dots,\mathrm{NFE}(s_{K,\hat{a}(K)})]^\mathrm{T} ,
\end{aligned}
\end{equation}
where $\mathbf{h}^{1}_1,\dots,\mathbf{h}^{1}_K$ are $M$-dimensional vectors and $\mathbf{H}^{1}$ is a $K\times M$ matrix. 
Next, the hidden critic layers (HCLs) process the embedded features. There are $\zeta$ HCLs in the critic, e.g., $\mathrm{HCL}^{l}$ is the $l$-th HCL, $l=1,\dots,\zeta$. $\mathrm{HCL}^{l}$ has two inputs as 1) the embedded node features from the previous HCL's output $\mathbf{H}^{l}$ (or from the NFE's output when $l=1$), and 2) the embedded edge features from the EFE, $\mathbf{G}^1,\dots,\mathbf{G}^{E}$, i.e.,
\begin{equation}\label{eq:hcl}
\begin{aligned}
\mathbf{H}^{l+1} \triangleq [\mathbf{h}^{l+1}_1,\dots,\mathbf{h}^{l+1}_K]^\mathrm{T} = \mathrm{HCL}^{l}(\mathbf{H}^{l},\mathbf{G}^1,\dots,\mathbf{G}^{E}),
\end{aligned}
\end{equation}
where each HCL's internal structure is shown in Fig.~\ref{subfig:gnn_critic_hidden_layer}.
Specifically, we use $E$ graph convolutional networks (GCNs)~\cite{kipf2017semi}\footnote{we use GCNs to construct the critic because most inputs of the critic in \eqref{eq:critic_nn_structure} are user-pairwise features (e.g., $\mathbf{I}$, $\mathbf{O}$ and $\mathbf{W}$) that can be viewed as edge features on a graph, as GCNs dedicated for. Alternatively, other GNNs can take edge features as inputs can be used, e.g., GNNs listed in \cite{pytorchgeo_gnnlist}. However, what the optimal GNN structure is for the critic is not the focus of this work.}, e.g., $\mathrm{GCN}^{l,e}$, $e=1,\dots,E$, in $\mathrm{HCL}^{l}$ to aggregate $\mathbf{H}^{l}$ and $\mathbf{G}^{e}$ into hidden node features as
\begin{equation}\label{eq:gcn}
\begin{aligned}
 \Tilde{\mathbf{H}}^{l,e}\triangleq &[\Tilde{\mathbf{h}}^{l,e}_1,\dots,\Tilde{\mathbf{h}}^{l,e}_K]^\mathrm{T} =  \mathrm{ReLU}\big(\mathrm{GCN}^{l,e}(\mathbf{H}^{l},\mathbf{G}^{e})\big) \\
= &\mathrm{ReLU}\big((\mathbf{D}^{l,e})^{-\frac{1}{2}}(\mathbf{G}^{e}+\mathbb{I}_K)(\mathbf{D}^{l,e})^{-\frac{1}{2}}\mathbf{H}^{l} \Theta^{l,e}\big) \ ,
\end{aligned}
\end{equation}
where $\mathrm{ReLU}(\cdot)$ is the rectified linear activation function and $\Theta^{l,e}$ are trainable parameters of $\mathrm{GCN}^{l,e}$.
Here, $\mathbb{I}_K$ is a $K\times K$ identity matrix and $\mathbf{D}^{l,e}$ is the diagonal degree matrix of $\mathbf{G}^{e}+\mathbb{I}_K$ (i.e., $D^{l,e}_{i,i}= \sum_{j=1}^{K} G^e_{i,j}+1$) in \eqref{eq:gcn}. Next, a FNN, namely hidden node feature embedder (HNFE), aggregates each user's hidden node features from all GCNs in $\mathrm{HCL}^{l}$ as
\begin{equation}\label{eq:hfne}
\begin{aligned}
&\mathbf{H}^{l+1} 
\triangleq [\mathbf{h}^{l+1}_1,\dots,\mathbf{h}^{l+1}_K]^\mathrm{T}\\
=& \Big[\mathrm{HNFE}^{l}
\Big(
\big[(\Tilde{\mathbf{h}}^{l,1}_1)^\mathrm{T},\dots,(\Tilde{\mathbf{h}}^{l,E}_1)^\mathrm{T}\big]^\mathrm{T}
\Big),\\
&\dots,\mathrm{HNFE}^{l}
\Big(
\big[(\Tilde{\mathbf{h}}^{l,1}_K)^\mathrm{T},\dots,(\Tilde{\mathbf{h}}^{l,E}_K)^\mathrm{T}\big]^\mathrm{T}
\Big)\Big]^\mathrm{T} \ , \forall l \ , \\
\end{aligned}
\end{equation}
where $\mathrm{HNFE}^l$ is the HNFE in the $l$-th layer and $\mathbf{H}^{l+1}$ is forwarded to the next HCL.
Last, a FNN is used as a readout function (ROF) to map the node features from the output of the last HCL and the initial per-user-wise network states $\hat{\mathbf{s}}$ into the approximated user throughput as
\begin{equation}\label{eq:rof}
\begin{aligned}
&\mathbf{Q}(\mathbf{S},\mathbf{W}|\theta^Q) = \dot{\mathbf{Q}}(\hat{\mathbf{s}},\mathbf{I},\mathbf{O},\mathbf{W}|\theta^{\dot{Q}})\\
= &\big[\mathrm{ROF}(\mathbf{h}^{\zeta+1}_{1},s_{1,\hat{a}(1)}),\dots,\mathrm{ROF}(\mathbf{h}^{\zeta+1}_{K},s_{K,\hat{a}(K)})\big]^\mathrm{T} \ ,
\end{aligned}
\end{equation}
where $\mathbf{h}^{\zeta+1}_{k}$, $\forall k $, are computed as \eqref{eq:efe}-\eqref{eq:hfne}.

\subsection{The Flow of the AC-GRL Algorithm}\label{subsec:grl_algorithm_flow}
Finally, we explain the flow of the AC-GRL algorithm that trains NNs in the above components, where the initial values of all NN parameters are randomly initialized.

\begin{algorithm}[!t]
\caption{Actor-Critic Graph Representation Learning}\label{alg:ac_grl}
\begin{algorithmic}[1]
\STATE Randomly initialize NNs' parameters, $\theta^\omega$, $\theta^{\dot{Q}}$ and $\theta^{\dot{\mu}}$.
\FOR{step $n$ = $1,\dots,N$}\label{alg:line:training_inference_nn_start}
    \STATE Generate a network randomly, and measure $\mathbf{S}$ and $\hat{\mathbf{O}}$.
    \STATE Compute $\nabla_{\theta^{\omega}} L(\theta^{\omega})$ as \eqref{eq:grad_loss_omega} and perform the SGD on the inference NN as\\ \qquad\qquad $\theta^{\omega}\leftarrow \theta^{\omega} - \eta \nabla_{\theta^{\omega}} L(\theta^{\omega})$.
\ENDFOR\label{alg:line:training_inference_nn_end}
\FOR{step $n$ = $1,\dots,N$}\label{alg:line:training_ac_nn_start}
    \STATE Generate a network randomly, and measure $\mathbf{S}$.
    \STATE Generate $\mathbf{W}$ using the graph-constructing actor \eqref{eq:actor_nn_structure}.
    \STATE With probability $\nu$, randomly set the values in $\mathbf{W}$.
    \STATE Compute $\mathbf{z}$ as \eqref{eq:z_by_DoGraphCut} using the graph cut procedure.
    \STATE Execute $\mathbf{z}$ in the network and measure $\mathbf{r}$.
    \STATE Compute $\nabla_{\theta^{\dot{Q}}}  L(\theta^{Q})$ as \eqref{eq:grad_loss_c} using the graph-evaluating critic and perform the SGD on the critic as \\
    \qquad\qquad $\theta^{\dot{Q}}\leftarrow \theta^{\dot{Q}} - \eta \nabla_{\theta^{\dot{Q}}}  L(\theta^{Q})$.
    \STATE Compute $\nabla_{\theta^{\dot{\mu}}}  L(\theta^{\mu})$ as \eqref{eq:grad_loss_a} using the actor and the critic and perform the SGD on the actor as  \\
    \qquad\qquad $\theta^{\dot{\mu}}\leftarrow \theta^{\dot{\mu}} - \eta \nabla_{\theta^{\dot{\mu}}}  L(\theta^{\mu})$.
\ENDFOR\label{alg:line:training_ac_nn_end}
\STATE \textbf{return} $\theta^\omega$, $\theta^{\dot{Q}}$ and $\theta^{\dot{\mu}}$ for online fine-tuning.
\end{algorithmic}
\end{algorithm}

\subsubsection{Pre-training of the Inference NN}
We first train the inference NN $\omega(\cdot|\theta^\omega)$ that is the shared part of the actor and the critic. In each step, we simulate a random realization of the wireless network where we measure the network states $\mathbf{S}$. We also acquire whether or not each pair of users can sense each other as a $K \times K$ matrix of binary indicators, defined as
\begin{equation}\label{eq:defi:true_value_hidden_users}
\begin{aligned}
\hat{\mathbf{O}}\triangleq[\hat{O}_{i,j} | \hat{O}_{i,j} = \mathbf{1}_{\{\text{user $j$ can sense user $i$}\}}, \forall i\neq j ; \hat{O}_{k,k}=0,\forall k] .
\end{aligned}
\end{equation}
Note that $\hat{\mathbf{O}}$ is only measured in offline simulations and does not need to be measured when deploying our methods in a real-world network.
Then, $\omega(\cdot|\theta^\omega)$'s parameters are optimized by minimizing a loss function as the cross entropy between $\hat{O}_{i,j}$ and its inferred expected value $O_{i,j}$ as
\begin{equation}\label{eq:loss_inference_nn}
\begin{aligned}
&L(\theta^\omega) = \mathop{\mathbb{E}} \Big[\sum_{i\neq j}-\hat{O}_{i,j}\log(\omega(\mathbf{s}_i,\mathbf{s}_j|\theta^\omega))\\
&\qquad\qquad\qquad-(1-\hat{O}_{i,j})\log(1-\omega(\mathbf{s}_i, \mathbf{s}_j|\theta^\omega)) \Big]\ ,
\end{aligned}
\end{equation}
where we update $\theta^\omega$ using the stochastic gradient descent (SGD) method \cite{kingma2014adam} based on the gradient of \eqref{eq:loss_inference_nn} as
\begin{equation}\label{eq:grad_loss_omega}
\begin{aligned}
\nabla_{\theta^{\omega}}  L(\theta^{\omega})=\sum_{i\neq j} \big(\frac{-\hat{O}_{i,j}}{O_{i,j}} - \frac{1-\hat{O}_{i,j}}{1-O_{i,j}}(-1)\big)\nabla_{\theta^{\omega}}\omega(\mathbf{s}_i, \mathbf{s}_j|\theta^\omega) .
\end{aligned}
\end{equation}
The training steps of the inference NN are repeated $N$ times.

\subsubsection{Main Training Process of the AC-GRL Algorithm}
After the inference NN is trained, we train the remaining parameters of the actor and the critic. The training process of the actor and the critic has been shown in Fig. \ref{fig:actor_critic_algorithm}.
We measure network states $\mathbf{S}$ from a randomized realization of the wireless network at the start of each training step. Then, the graph-constructing actor generates edge weights $\mathbf{W}$ as \eqref{eq:actor_nn_structure}. 
With a probability of $\nu$ ($\nu\in [0,1]$), we set the off-diagonal elements of $\mathbf{W}$ with random numbers in $[0,1]$ to explore good edge weights. 
After that, the graph cut procedure computes the grouping decisions $\mathbf{z}$ based on $\mathbf{W}$ by calling $\mathrm{DoGraphCut}(\cdot)$ in Algorithm \ref{alg:recur_cut} as \eqref{eq:z_by_DoGraphCut}. We then measure the user throughput $\mathbf{r}$ from the network for the given $\mathbf{z}$. 
The measured user throughput is used to optimize the graph-evaluating critic that approximates the throughput for given network states and edge weights. Thus, the loss function of the critic is the difference between the estimated user throughput and the actual measured one as
\begin{equation}\label{eq:loss_c_original_state}
\begin{aligned}
L(\theta^Q) =
\mathop{\mathbb{E}} \Big[ \sum_{k=1}^{K} \big(r_k  -Q_k(\mathbf{S},\mathbf{W}|\theta^Q)  \big)^2 \Big] \ ,
\end{aligned}
\end{equation}
whose gradient of $L(\theta^{Q})$ with respect to $\theta^{\dot{Q}}$ is
\begin{equation}\label{eq:grad_loss_c}
\begin{aligned}
&\nabla_{\theta^{\dot{Q}}} L(\theta^{Q}) = \sum_{k=1}^{K}\Big\{2 \big[r_k- \dot{Q}_k(\hat{\mathbf{s}},\mathbf{I},\mathbf{O},\mathbf{W}|\theta^{\dot{Q}})\big]\\
&\qquad\qquad \times\nabla_{\theta^{\dot{Q}}}\dot{Q}_k(\hat{\mathbf{s}},\mathbf{I},\mathbf{O},\mathbf{W} |\theta^{\dot{Q}})\Big\} \Big|_{O_{i,j} = \omega(\mathbf{s}_i, \mathbf{s}_j|\theta^\omega),\forall i\neq j}\ . 
\end{aligned}
\end{equation}
Also, the critic is used to optimize the actor. Specifically, the actor is optimized to generate the best edge weights that maximize the worst-case user throughput in \eqref{eq:prob:user_grouping:graph_cut:adaptive_edge_weighting}, whose value is expressed as a loss function based on the critic as
\begin{equation}\label{eq:loss_a_original_state}
\begin{aligned}
L(\theta^{\mu}) = \mathop{\mathbb{E}} \Big[- \min_k Q_k\big(\mathbf{S},\mathbf{W}|\theta^Q)|_{W_{i,j} = \mu(\mathbf{s}_i,\mathbf{s}_j|\theta^\mu)} \Big] \ ,
\end{aligned}
\end{equation}
where $\min_k Q_k(\cdot)$ is the worst-case user throughput approximated by the critic and we can derive the gradients of $L(\theta^{\mu})$ with respect to $\theta^{\dot{\mu}}$ as
\begin{equation}\label{eq:grad_loss_a}
\begin{aligned}
&\nabla_{\theta^{\dot{\mu}}}  L(\theta^{\mu})
=  - \nabla_{\mathbf{W}}\min_k Q_k\big(\mathbf{S},\mathbf{W}|\theta^Q) \times \nabla_{\theta^{\dot{\mu}}} \mathbf{W} = \\
& - \sum_{i \neq j}\frac{\partial \min_k\dot{Q}_k(\hat{\mathbf{s}},\mathbf{I},\mathbf{O},\mathbf{W}|\theta^{\dot{Q}})}{\partial W_{i,j}} \\
&\times \nabla_{\theta^{\dot{\mu}}}\dot{\mu}(s_{j,\hat{a}(j)},s_{i,\hat{a}(j)},s_{i,\hat{a}(i)},O_{i,j}|\theta^{\dot{\mu}})\big|_{O_{i,j} = \omega(\mathbf{s}_i, \mathbf{s}_j|\theta^\omega),\forall i\neq j}. \\
\end{aligned}
\end{equation}
We use the gradients in \eqref{eq:grad_loss_c} and \eqref{eq:grad_loss_a} to update the actor and the critic's parameters based on SGD, respectively. Here, we repeat the training steps for the actor and the critic for $N$ times.
Algorithm \ref{alg:ac_grl} summarizes the NNs' training process, where lines \ref{alg:line:training_inference_nn_start}-\ref{alg:line:training_inference_nn_end} are the inference NN's training process and lines \ref{alg:line:training_ac_nn_start}-\ref{alg:line:training_ac_nn_end} are the actor and the critic's training process.
Note that $\eta$ is the learning rate of the NNs.

We study the complexity of the AC-GRL algorithm in Algorithm \ref{alg:ac_grl} with regard to the number of APs $A$, the number of users $K$, and the number of group $Z$.
The detailed complexity analysis of each component in the algorithm is discussed in the appendix.
In summary, the pre-training process in the AC-GRL algorithm has a complexity of $\mathcal{O}(K^2A^2)$ in each step, while the main training process has a complexity of $\mathcal{O}(K^2A^2 +(Z-1) K^{3.5} + K^2)$ in each step (both processes are polynomial to $K$ and $A$ and the latter is also linear to $Z$).

\section{Online Fine-Tuning Architecture of AC-GRL}\label{sec:online_arch}
\begin{figure}[!t]
\centering
\includegraphics[scale=0.75]{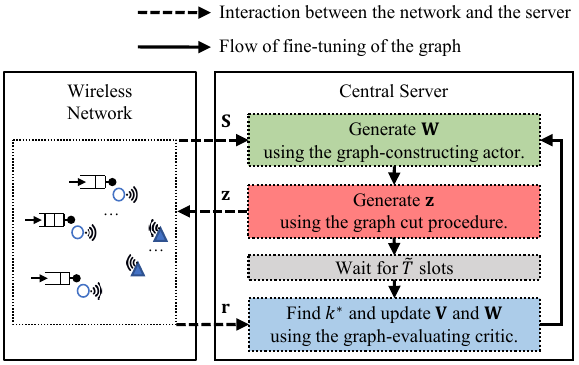}
\caption{The online fine-tuning architecture of AC-GRL.}
\label{fig:wifi_online_architecture}
\vspace{-0.2cm}
\end{figure}
In this section, we explain the online architecture, as shown in Fig. \ref{fig:wifi_online_architecture}, that continuously fine-tunes the graph's edge weights for a given static realization of the wireless network as well as for dynamic networks, e.g., due to user mobility.
Note that we assume that Algorithm \ref{alg:ac_grl} has returned the trained parameters of the inference NN, the actor and the critic in offline training. We will not make any updates on them in the online architecture.

Specifically, to fine-tune the edge weights for a given network, we measure the network states $\mathbf{S}$ (user-to-AP path losses) at the initialization of the network and infer contending user probability $O_{i,j}$, $\forall i \neq j$, as \eqref{eq:infer_hidden_target} before the transmissions start. 
Next, at the first RAW slot, $t=1$, we compute edge weights $\mathbf{W}$ using the actor as \eqref{eq:actor_nn_structure} and then use the computed $\mathbf{W}$ to generate the user grouping decisions $\mathbf{z}$ as \eqref{eq:z_by_DoGraphCut}.
The decisions will remain unchanged for $\Tilde{T}$ slots (we assume $\Tilde{T}>Z$). After that, at slot $t=1+\Tilde{T}$, we measure each user's throughput $\Tilde{r}_k$, $k=1,\dots,K$, over the past $\Tilde{T}$ slots and find the worst-case user's index as
\begin{equation}\label{eq:worst-case_online}
\begin{aligned}
\Tilde{r}_k\leftarrow \frac{1}{\Tilde{T}} \sum_{t'=t-\Tilde{T}}^{t-1} u_k(t') \ , \  k^* \leftarrow \arg\min_k \Tilde{r}_k \ .
\end{aligned}
\end{equation}The elements in $\mathbf{W}$ are then fine-tuned based on the worst-case user's index and the critic $\mathbf{Q}$ in \eqref{eq:critic_nn_structure} by minimizing the following loss function,
\begin{equation}\label{eq:loss_weight_online_arch}
\begin{aligned}
L(\mathbf{W}) =  - Q_{k^*}(\mathbf{S},\mathbf{W}|\theta^Q) \ .
\end{aligned}
\end{equation}
In order to prevent the updated weight values from going out of the bounds on weights, i.e., $[0,1]$, we update an intermediate $K\times K$ matrix $\mathbf{V}$ instead, where
\begin{equation}\label{eq:defi:v}
\begin{aligned}
\mathbf{V} &\triangleq \big[V_{i,j}\big| V_{i,j}= \mathrm{Sigmoid}^{-1}(W_{i,j}) , \forall i\neq j; V_{k,k}=0,\forall k \big] .
\end{aligned}
\end{equation}
Here, $\mathrm{Sigmoid}^{-1}$ is the inverse of the $\mathrm{Sigmoid}$ function with the output range in $[0,1]$.
Then, $\mathbf{V}$ is updated based on the gradient of the critic as
\begin{equation}\label{eq:gradient_loss_w_online}
\begin{aligned}
\nabla_{\mathbf{V}}  L(\mathbf{W}) 
&=  - \nabla_{\mathbf{V}}Q_{k^*}(\mathbf{S},\mathbf{W}|\theta^Q) \\
&=  - \nabla_{\mathbf{W}}Q_{k^*}(\mathbf{S},\mathbf{W}|\theta^Q) \times \nabla_{\mathbf{V}}\mathbf{W} \ ,
\end{aligned}
\end{equation}
where $\nabla_{\mathbf{V}}\mathbf{W}$ can be computed based on the derivative of the $\mathrm{Sigmoid}$ function in \eqref{eq:defi:v}.
Then, $\mathbf{V}$ and  $\mathbf{W}$ are updated as
\begin{equation}\label{eq:update_w_online}
\begin{aligned}
\mathbf{V} \leftarrow \mathbf{V} - \eta \nabla_{\mathbf{V}}  L(\mathbf{W}) \ , 
\mathbf{W} \leftarrow \mathrm{Sigmoid} (\mathbf{V} ) \ ,
\end{aligned}
\end{equation}
where $\eta$ is the learning rate. In scenarios where users move, network states $\mathbf{S}$ can vary over the past $\Tilde{T}$ slots as the path losses between users and APs vary due to changes in their distances.
We constantly update $\mathbf{S}$ to the latest values measured from the most recent packets sent by users. The updated states $\mathbf{S}$ are used to partly regenerate the edge weights $\mathbf{W}$ as 
\begin{equation}\label{eq:update_w_online_refresh}
\begin{aligned}
W_{i,j} \leftarrow (1-\lambda) W_{i,j} +  \lambda \cdot \mu(\mathbf{s}_i,\mathbf{s}_j|\theta^\mu) \ , \ \forall i , j \ ,
\end{aligned}
\end{equation}
where $\lambda$ is the weight regeneration rate.
Based on the updated $\mathbf{W}$, we recompute the grouping decisions as \eqref{eq:z_by_DoGraphCut} and measure user throughput again for the next $\Tilde{T}$ slots. At slot $t+\Tilde{T}$, the worst-case user during time from $t$ to $t+\Tilde{T}$ is found as \eqref{eq:worst-case_online} and we update $\mathbf{V}$ and $\mathbf{W}$ as \eqref{eq:loss_weight_online_arch}-\eqref{eq:update_w_online_refresh}. The above process repeats every $\Tilde{T}$ slots until the network stops.

\section{Evaluation of Proposed Methods}\label{sec:evaluation}
This section provides the simulation results that evaluate our proposed methods.
\subsection{Simulation Configurations}\label{subsec:ns3_simulation_config}
We use NS-3 \cite{ns3,tian2016implementation} to simulate Wi-Fi HaLow networks. 
All devices are located in a 2km~$\times$~2km squared area centered at the coordinates $(0,0)$ meter.
We assume there are $4$ APs ($A=4$), and they are located at the grid in the simulated area \cite{aust2015outdoor,adhiatma2020ieee}, i.e., they are at the coordinates $(500,500)$, $(-500,500)$, $(500,-500)$ and $(-500,-500)$ meters.
Unless specifically stated, we assume that users are static, and we set the number of users $K$ to $20$.
Each user is randomly distributed in the simulated area, where each coordinate is randomly selected from the uniform distribution with an interval $[-1000,1000]$ meters. The duration of a RAW slot, $\Delta_\text{0}$, is configured as $10$~milliseconds. The channel bandwidth $B$ is set as $1$~MHz at a $1$~GHz carrier frequency. 
The transmission power of users is set as $\mathbf{P}_\mathrm{0}=0$~dBm, and the noise power is set as $\mathbb{N}_\mathrm{0}B =-94$~dBm.
The path losses between any two devices (including APs and users) follow Friis model \cite{friis1946note} as $10\log10((\frac{\mathrm{c}}{1\text{ GHz}})^2/(4\pi d)^2)$ in dB, where $\mathrm{c}$ is the speed of light and $d$ is the distance in meters between two devices. The threshold on path losses where a device can sense a transmission, $\Tilde{s}_\mathrm{max}$, is set to $95$~dB. The packet size $L$ is 800 bits, the maximum queue size is $5$, and each user has a stationary Poisson packet arrival process with intervals of 20 milliseconds. The maximum decoding error probability $\epsilon_{\max}$ for each user is $10^{-5}$ when no interference exists. The decoding error probability of each transmission is computed using the same equation in \eqref{eq:tx_error_for_mcs_wi-fi}, where $\phi$ is the signal-to-interference-plus-noise of each transmission instead. 

All FNNs in the actor and the critic have one input layer, one output layer, and two hidden layers. The activation functions of all hidden layers in FNNs are set as $\mathrm{ReLU}$ functions. The size of each layer and the output activation functions (OAFs) of FNNs are listed in Table \ref{tab:configurations_of_FNNs}, where $M=E=5$. The number of HCLs in the critic is $\zeta = 3$. Trainable parameters of GCNs, $\Theta^{l,e}$, has the size of $M\times M$, $\forall l,e$.
The exploration rate $\nu$ is $10\%$, the NNs' learning rate $\eta$ is $10^{-4}$ and $10^{-3}$ in Algorithm \ref{alg:ac_grl} and the online architecture, respectively, and the weight regeneration rate $\lambda$ in the online architecture is $0.1$. The number of steps $N$ in Algorithm \ref{alg:ac_grl} is $1000$.
\begin{table}[t]
\vspace{0.2cm}\small
\caption{Configurations of FNNs in the Actor and the Critic}
\label{tab:configurations_of_FNNs}
\begin{minipage}{\columnwidth}
\begin{center}
\begin{tabular}{|c|c|c|}
\hline
NNs             & Dimensions of layers & OAFs   \\ \hline
$\omega$        &   $2A,20A,20A,1$                   &      $\mathrm{Sigmoid}(\cdot)$                        \\ \hline
$\dot{\mu}$     &   $4,40,40,1$              &               $\mathrm{Sigmoid}(\cdot)$               \\ \hline
 $\mathrm{EFE}$             &      $3,30,30,E$                &              $\mathrm{ReLU}(\cdot)$                \\ \hline
 $\mathrm{NFE}$             &         $1,10,10,M$             &           $\mathrm{ReLU}(\cdot)$                       \\ \hline
 $\mathrm{HNFE}^l$, $\forall l$           &        $EM,10EM,10EM,M$              &      $\mathrm{ReLU}(\cdot)$                            \\ \hline
 $\mathrm{ROF}$             &        $M+1,10(M+1),10(M+1),1$                 &     None                        \\ \hline
\end{tabular}
\vspace{-0.2cm}
\end{center}
\end{minipage}
\end{table}

\subsection{Compared Methods in Simulations}
We explain the implementation of the compared methods other than the proposed methods, including heuristics, Markov-model-based, graph-based and ML-based approaches.
\subsubsection{Applying Heuristics}
We compare our method to the heuristics method that simply randomly allocates users in each group with equal probability, referred to as the ``RAND'' scheme. We also compare a method that uniformly balances the number of users in each group, referred to as the ``UNIF'' scheme. Specifically, we sort the users based on their associated AP's index as $k'_1,\dots,k'_K$, where $\hat{a}(k'_1)\leq \hat{a}(k'_2) \leq\cdots\leq \hat{a}(k'_K)$. Then, we set $z_{k'_i} = (i\ \mathbf{mod}\ Z)+1$, $\forall k$, which balances the number of users in different groups for each AP.

\subsubsection{Applying Markov-model-based Approach}
We compare the Markov-model-based approach \cite{chang2018traffic,kai2019energy} to the user grouping problem, where the Markov models in \cite{bianchi2000performance} are used to estimate each user's throughput. Note that works in \cite{chang2018traffic,kai2019energy} assume that all users can sense each other, i.e., no hidden users.
The grouping decisions are generated by using the iterative algorithm developed in \cite{chang2018traffic}. Specifically, In each iteration, the algorithm selects the best user among all un-grouped users and the best grouping decision of the selected user to maximize the worst-case user throughput in grouped users. Then, the selected user is marked as grouped, and another un-grouped user will be selected in the next iteration until all users are grouped. We refer to the above method as the ``MC-based'' scheme in simulations.

\subsubsection{Applying Graph-based Approach}
The graph-based approach to the grouping problem is also compared. Note that the works in \cite{chen2022energy} formulate the user grouping problem as a graph coloring problem in which edges connect the users if the same AP can detect them.
Then, connected users must have different colors/groups and be assigned different RAW slots.
This approach requires many groups in dense networks as users are densely connected in the above graph construction. This is unsuitable for the problem in \eqref{eq:prob:user_grouping} where the number of groups $Z$ is limited. Then, we consider using the graph-max-cut-based schemes in \cite{chang2009multicell,liang2018graph}. They compute the user grouping decisions as \eqref{eq:prob:user_grouping:max_cut} but based on manually designed rules to generate the graph's edge weights.
For example, since the user throughput is mostly affected by the contention and interference in the network, we can design the edge weights as $W_{i,j}=O_{i,j}$, $\forall i \neq j$, where $O_{i,j}$ is computed using the trained inference NN as \eqref{eq:infer_hidden_target}, indicating how likely user $j$ can sense user $i$, i.e., how likely user $i$ can trigger the CSMA/CA contention of user $j$. 
Alternatively, we can set the edge weights as $W_{i,j}=1-O_{i,j}$, $\forall i \neq j$, to indicate how likely users are hidden or how likely they will make concurrent transmissions causing interference.
Also, we can set edge weights $W_{i,j}$, $\forall i \neq j$, to indicate the interference caused by user $i$ to user $j$'s associated AP as $W_{i,j}=\phi'_{i,j} = (\mathbf{P}_\mathrm{0}/s_{j,\hat{a}(j)})/(\mathbb{N}_\mathrm{0}B+\mathbf{P}_\mathrm{0}/s_{i,\hat{a}(j)})$ (where $s_{i,j}$ are unnormalized). We refer to the above graph-max-cut-based schemes that set $W_{i,j}$ to $O_{i,j}$, $1-O_{i,j}$ and $\phi'_{i,j}$, $\forall i \neq j$, as the ``MCON'', ``MHID'' and ``MINT'' schemes, respectively.

\subsubsection{Applying ML-based Approach}
We also compare the ML-based approach when solving the problem, where a NN is trained to directly generate grouping decisions $\mathbf{z}$ based on network states $\mathbf{S}$. 
For example, we use the random-edge GNN (REGNN) design in \cite{eisen2020optimal}, whose structure of each layer is
\begin{equation}\label{eq:defi:regnn}
\begin{aligned}
 \mathbf{g}^{l+1}= \sigma(\sum_{n=1}^{\xi}\alpha_{n}^l (\mathbf{I})^n\mathbf{g}^l) \ , \ l=1,\dots,\chi-1.
\end{aligned}
\end{equation}
where $\mathbf{I}$ is defined in \eqref{eq:defi:user_states_reformulation_matrix_vector} and $\mathbf{g}^i$ are $K$-dimensional vectors ($\mathbf{g}^1$ is set as $\hat{\mathbf{s}}$ defined in \eqref{eq:defi:user_states_reformulation_matrix_vector}). 
Here, $\chi=5$ is the number of layers, $\xi=5$ is the graph filter coefficient, $\alpha_{n}^l$, $\forall n , l$, are trainable parameters in each layer, and $\sigma$ is the activation function (e.g., $\mathrm{ReLU}(\cdot)$). Note that REGNNs only output one scalar feature for each user. Thus, we encode user grouping decisions $z_k$, $k=1,\dots,K$, as a binary number, e.g., $z_k=1,2,3,4$ as $z_k=00,01,10,11$, and use two REGNNs with different parameters outputs two bits of the grouping decisions separately. 
We can also use message-passing GNN (MPGNN) studied in \cite{shen2020graph} to generate the user grouping decisions, e.g.,
\begin{equation}\label{eq:defi:mpgnn}
\begin{aligned}
 \mathbf{g}^{l+1}_i= \sigma\Big( \mathrm{FNN2}\big( \mathbf{g}^{l}_i, \max_{j\neq i}\mathrm{FNN1}(\mathbf{g}^{l}_i, s_{i,\hat{a}(j)})\big)\Big), \\
 \ l=1,\dots,\chi-1, i=1,\dots,K \ , 
\end{aligned}
\end{equation}
where $\mathrm{FNN1}$ and $\mathrm{FNN2}$ are two FNNs, $\chi=3$ is the number of layer, $\mathbf{g}^{l}_i$ is the user-wise feature of $i$-th user in the $l$-th layer ($\mathbf{g}^{1}_i=\mathbf{s}_i$). 
We use policy gradient \cite{eisen2020optimal,sutton1999policy} to train the REGNN and the MPGNN.

\subsection{Performance of Proposed AC-GRL Algorithm}
We evaluate the AC-GRL algorithm in Algorithm \ref{alg:ac_grl} below.
\subsubsection{Training of the Inference NN}
We show the training information of the inference NN in Fig. \ref{fig:plot_infer}, i.e., the first part of Algorithm \ref{alg:ac_grl} (lines \ref{alg:line:training_inference_nn_start}-\ref{alg:line:training_inference_nn_end}). In Fig. \ref{subfig:plot_infer_loss_accu}, the value of the loss function in \eqref{eq:loss_c_original_state} decreases over training steps and converges around $400$ steps. Also, we measure the accuracy of the inference NN during training. For example, we sample the binary value $O'_{i,j}$ from $O_{i,j}$ computed as \eqref{eq:infer_hidden_target} and measure the probability that $O'_{i,j}$ is equal to its true value $\hat{O}_{i,j}$, as shown in Fig. \ref{subfig:plot_infer_loss_accu}. The results show the accuracy of the inference NN also converges around $400$ steps. Further, we measure the accuracy of the NN separately for two possible values of $\hat{O}_{i,j}$ in Fig. \ref{subfig:plot_infer_accu_0_1}. The results show that the accuracy converges in either case, implying that the inference NN makes no biased guess on the value of $\hat{O}_{i,j}$, e.g., simply guessing one of the possible values. Thus, the inference NN is well-trained and can be used in the following part of the algorithm.

\begin{figure}[t]
\begin{subfigure}[b]{1\columnwidth}
\centering
\includegraphics[scale=0.8]{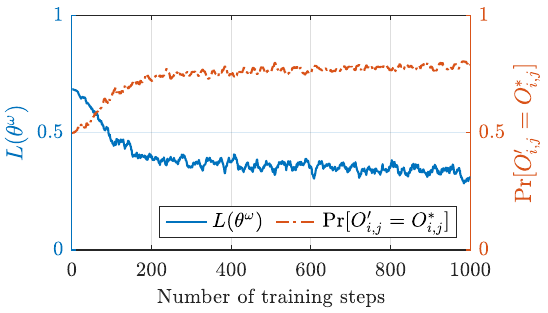}
\caption{The loss and the accuracy of the inference NN during training.}
\label{subfig:plot_infer_loss_accu}
\end{subfigure}\\
\begin{subfigure}[b]{1\columnwidth}
\centering
\includegraphics[scale=0.8]{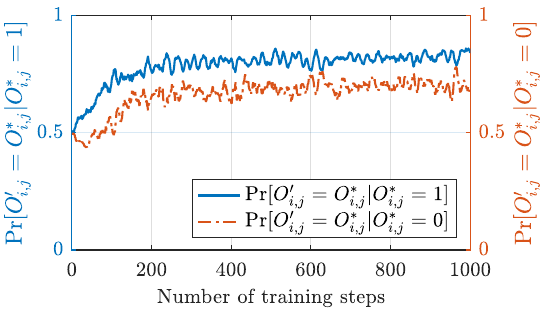}
\caption{The accuracy of the inference NN for two different cases of $\hat{O}_{i,j}$, $\forall i,j$, during training.}
\label{subfig:plot_infer_accu_0_1}
\end{subfigure}
\caption{The training information of the inference NN $\omega$.}
\label{fig:plot_infer}
\vspace{-0.2cm}
\end{figure}

\subsubsection{Training of the Actor and the Critic}
Next, we evaluate the training of the actor and the critic in Algorithm \ref{alg:ac_grl} (lines \ref{alg:line:training_ac_nn_start}-\ref{alg:line:training_ac_nn_end}). We measure the worst-case user throughput, $\min_k r_k$, and the average throughput of users, $\frac{1}{K} \sum_k r_k$, during training in Fig. \ref{subfig:plot_training_min_rate} and Fig. \ref{subfig:plot_training_sum_rate}, respectively. 
We compare our AC-GRL method (with legend ``Proposed'') with the methods directly using the REGNN or the MPGNN as the user grouping policy.
The results show that our method converges after $500$ training steps, achieving over $80\%$ more throughput than two other schemes without degradation in the average throughput of users. 
This is because the compared NNs, REGNN and MPGNN, aggregate all neighboring users' information of a given user, e.g., due to the matrix-vector multiplication $(\mathbf{I})^n\mathbf{g}^l$ and $\max$ function in \eqref{eq:defi:regnn} and \eqref{eq:defi:mpgnn}, respectively. As a result, the contention and interference information in each user pair is lost in aggregation in these methods. Therefore, they cannot express the user-pairwise correlation between grouping decisions, i.e., how likely a pair of users should be assigned in the same slot. 
However, our method uses a max cut process to generate the user grouping decisions, where the correlation between each pair of users' decisions is retained and indicated by the edge weights between each user pair. 

In the following simulations, we use the above-trained actor and critic (trained with $K=20$ users and $Z=4$ groups) to generate edge weights and grouping decisions without further updating the parameters of the actor and the critic.
\begin{figure}[t]
\begin{subfigure}[b]{1\columnwidth}
\centering
\includegraphics[scale=0.8]{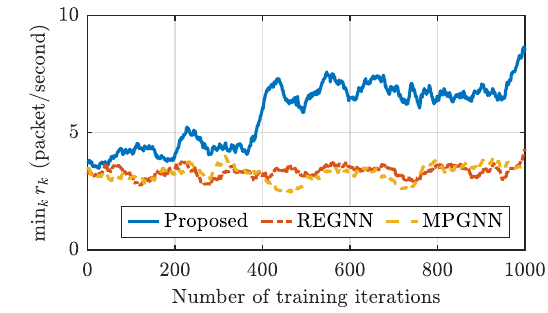}
\caption{The worst-cause user throughput during training.}
\label{subfig:plot_training_min_rate}
\end{subfigure}\\
\begin{subfigure}[b]{1\columnwidth}
\centering
\includegraphics[scale=0.8]{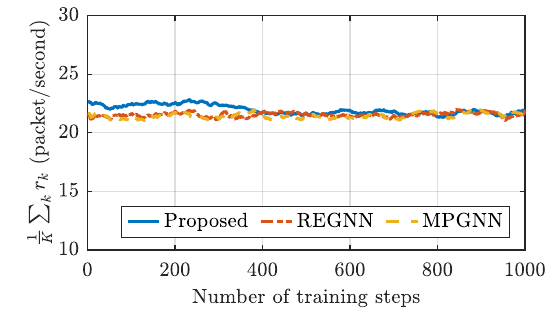}
\caption{The averaged user throughput during training.}
\label{subfig:plot_training_sum_rate}
\end{subfigure}
\caption{The training information of the actor and the critic.}
\label{fig:plot_training}
\vspace{-0.2cm}
\end{figure}

\subsection{Performance of the Trained Actor}
\subsubsection{Comparison to Other Methods}
Next, we use the trained actor to generate the graph's edge weights and compare the user grouping decisions made by the graph cut procedure with the decisions made by other methods. We measure the worst-case user throughput and all users' throughput in $1000$ random realizations of the network and plot the cumulative distribution functions (CDFs) of them in Fig. \ref{fig:plot_min_rate_eval_cdf}.
Fig. \ref{subfig:plot_min_rate_eval_cdf_group_1} shows the CDFs of user throughput achieved by the trained actor in our methods (with legend ``Proposed'') and achieved by heuristic schemes, RAND and UNIF, and the MC-based scheme. 
The results show that the RAND heuristic scheme performs the worst since it randomly allocates users in RAW slots and can possibly allocate highly contended or interfered users into the same slot. Also, the UNIF and MC-based schemes have close performance because both methods have a strategy that balances the number of transmissions in each slot. Further, the proposed scheme performs the best, e.g., $65\%\sim100\%$ higher worst-case user throughput on average. This is due to the fine exploitation of the contention and interference information in the optimized edge weights. 

Also, we compare the CDF of user throughput achieved by our scheme and achieved by graph-max-cut-based methods in Fig.~\ref{subfig:plot_min_rate_eval_cdf_group_2}. The results show that the MCON and MHID schemes achieve higher worst-case user throughput than the other graph-max-cut-based method, MINT. 
This is because MCON and MHID aim to cut more edges between either contending or hidden users (the two main causes of throughput starvation), while these features are not considered in MINT. 
Also, it is shown that our proposed methods achieve better worst-case user throughput than the MCON and MHID schemes (around $30\%$ on average). This is because our scheme considers the path losses from users to APs as the input when deciding the edge weights, whose values thus have more information on the contention and interference in the network, e.g., the interference power/duration.
Furthermore, the CDFs of all users' throughput in Fig. \ref{subfig:plot_min_rate_eval_cdf_group_1} and Fig. \ref{subfig:plot_min_rate_eval_cdf_group_2} show that the improvement in the worst-case user throughput has a minor reduction in users' throughput above the median. 
We average the total user throughput in the above simulations in Fig. \ref{fig:plot_min_rate_eval_cdf}, and the results indicate that our scheme has $3\%\sim4\%$ less total throughput than other methods, which is negligible compared to the improvement of the worst-case user throughput (e.g., over $65\%$ and over $30\%$ improvement in Fig.~\ref{subfig:plot_min_rate_eval_cdf_group_1} and Fig.~\ref{subfig:plot_min_rate_eval_cdf_group_2}, respectively).

\begin{figure}[t]
\begin{subfigure}[b]{1\columnwidth}
\centering
\includegraphics[scale=0.8]{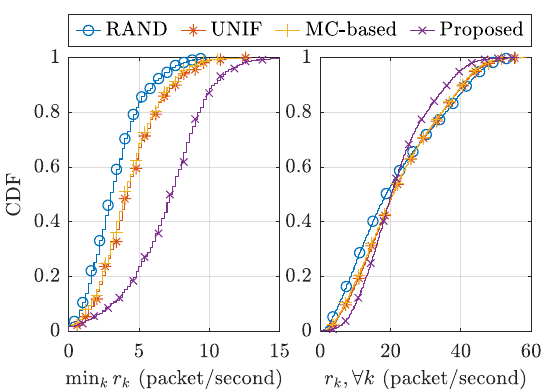}
\caption{Comparison with heuristic and Markov-model-based methods}
\label{subfig:plot_min_rate_eval_cdf_group_1}
\end{subfigure}\\
\begin{subfigure}[b]{1\columnwidth}
\centering
\includegraphics[scale=0.8]{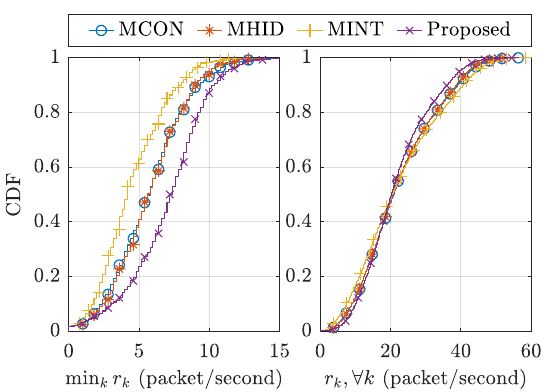}
\caption{Comparison with graph-max-cut-based methods}
\label{subfig:plot_min_rate_eval_cdf_group_2}
\end{subfigure}
\caption{Comparison of the proposed and other methods.}
\label{fig:plot_min_rate_eval_cdf}
\vspace{-0.2cm}
\end{figure}

Fig. \ref{fig:plot_min_rate_loc} shows the locations of users with low throughput with respect to the location of APs in four schemes, namely RAND, MC-based, MHID, and our scheme. 
Specifically, we show the locations of users whose throughput is in the range of $[0,2.5)$ and $[2.5,5)$ in the $1000$ random realizations of the network. The results show that our methods significantly reduce the number of users with low throughput and improve the users' throughput far from the APs, e.g., on the boundary of the simulated area.
\begin{figure}[!t]
\centering
\includegraphics[scale=0.8]{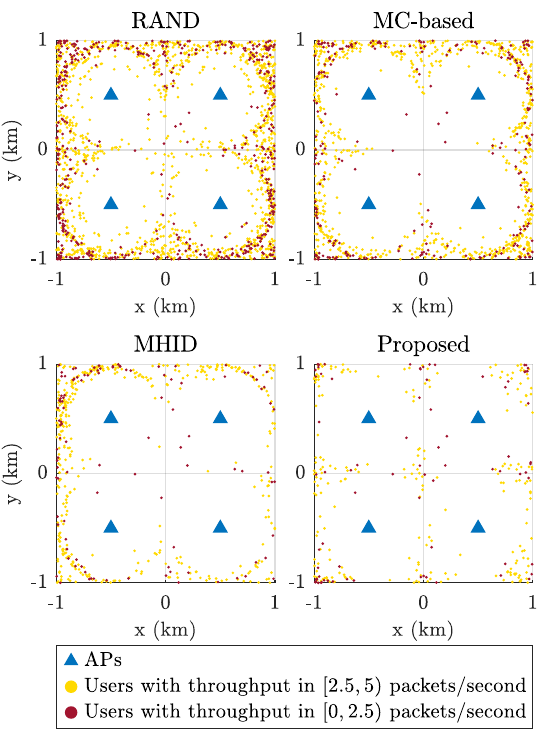}
\caption{Low-throughput users' locations in the simulated area.}
\label{fig:plot_min_rate_loc}
\vspace{-0.2cm}
\end{figure}

\subsubsection{Impact of the Number of Users and Groups}
\begin{figure}[!ht]
\centering
\includegraphics[scale=0.8]{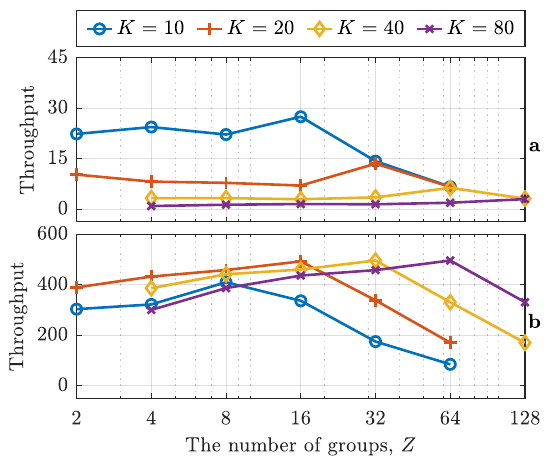}
\caption{The performance of the trained actor with different numbers of users and groups, $K$ and $Z$, (a) the worst-case user throughput and (b) the total throughput in packets per second.}
\label{fig:plot_min_tot_for_k_z}
\vspace{-0.2cm}
\end{figure}

We also study the impact of different numbers of users and groups on the performance of our scheme's grouping decisions. 
Fig.~\ref{fig:plot_min_tot_for_k_z} shows the user throughput when we vary the number of groups, $Z$, from $2$ to $128$ for specific numbers of users, $K$. We repeat the simulation for different $K$, e.g., $10$, $20$, $40$ and $80$. For each case of $K$ and $Z$, we average the measured throughput in $1000$ random realizations of the network

Fig. \ref{fig:plot_min_tot_for_k_z}a shows three different regions in worst-case user throughput measurements when varying $Z$ for each specific $K$. 
First, when $Z$ is smaller than $K$, our scheme achieves approximately the same worst-case user throughput in different $Z$ for given $K$. Note that the increasing number of groups leads to a higher interval between each user's scheduled RAW slots, resulting in fewer scheduled slots (or less transmission time) on average. As the worst-case user throughput is approximately the same when increasing $Z$ (even though there is less transmission time), it implies that our method can exploit additional groups to reduce contention and interference. 
Second, when $Z$ increases to a larger amount than given $K$, the worst-case user throughput is larger than the one in the first region. This is because there are a sufficient number of groups so that almost every user can have an individual slot, where contention or interference barely happens. 
Third, the further increment on $Z$ for given $K$ reduces the worst-case user throughput.
This is because each user transmits at its maximum capacity in each of its scheduled slots without contention and interference, while the high interval between the user's scheduled slots results in low throughput on average.
Furthermore, for given $Z$, increasing $K$ reduces the worst-case user throughput. 
This is because wireless resources for each user's transmissions reduce as the total number of users increases.

The results in Fig. \ref{fig:plot_min_tot_for_k_z}b show two regions in the total user throughput when varying $Z$ in each case of $K$. Specifically, when $Z$ is smaller than $K$, increasing $Z$ for given $K$ improves the total user throughput. This is because more user groups help manage the contention and interference across the network, reducing the backoff time and packet losses. However, when $Z$ is larger than given $K$, increased groups result in a reduced total user throughput. This is because each slot has either few or no users, and the slot's time is not well-utilized for transmissions.
Additionally, we note that the maximum total throughput is close for different numbers of users. This is because a given Wi-Fi network has limited wireless resources, and thus, it has a finite total user throughput.

By comparing Fig. \ref{fig:plot_min_tot_for_k_z}a and \ref{fig:plot_min_tot_for_k_z}b, we observe that the optimal worst-case and total user throughput is achieved at different values of $Z$ in each case of $K$. This indicates that we can establish a trade-off between the worst-case user throughput and the total user throughput by controlling the number of groups, e.g., by setting $Z$ to the value that maximizes either worst-case or total user throughput according to $K$. 
Also, note that increasing the number of groups causes a larger interval between two scheduled slots of a user, leading to higher delays for the packets arriving between scheduled slots. This issue must be considered for latency-sensitive applications. 
Thus, how to optimize the number of groups requires further study.

\subsection{Performance of Proposed Online Architecture}
We then evaluate the performance of the proposed online architecture that fine-tunes the edge weights.
We measure the user throughput and update the weights as Section \ref{sec:online_arch} every $2$ seconds (i.e., every $200$ RAW slots or $\Tilde{T}=200$) for $200$ seconds. 
We fine-tune the edge weights generated by the trained NNs in networks with $10$, $20$ and $40$ users.
We compute the ratio of the worst-case user throughput achieved by fine-tuned edge weights to the one achieved by fixed edge weights as its initial values, as shown in Fig. \ref{fig:plot_online_mob_0_and_mob_2}.
We consider two scenarios on user mobility: 1) static users without movement, as shown in Fig. \ref{fig:plot_online_mob_0_and_mob_2}a, 2) mobile users moving in a random direction at $2$ meters per second, e.g., at approximately human or robot walking speed, (if a user reaches the boundary of the simulated area, it will choose another random direction towards the inside of the area), as shown in Fig. \ref{fig:plot_online_mob_0_and_mob_2}b.

When users' locations are static, the results show that the proposed architecture can further improve $5\sim10\%$ of the worst-case user throughput compared to its initial value when $20$ users are in the network. This is because the online architecture can fine-tune the edge weight according to the feedback on the worst-case user for every $\Tilde{T}$ slots.
Also, the results show that when the number of users is small (e.g., $10$ users), there is no significant improvement, while a large improvement, $5\sim30\%$, of the worst-case user throughput is achieved when more users are in the network  (e.g., $40$ users). This is because when there are fewer users, users are sparsely located in the area, causing less contention and interference with each other. Thus, the margin of system performance is small.
Meanwhile, when more users exist in the network, they are more densely located and make heavy contention and interference in the network, where fine-tuning the edge weights can significantly improve the network performance.

When users are mobile, the improvement achieved by fine-tuning edge weights is significant (around $200\%$) compared to the fixed edge weights when $K$ is $20$ and $40$. This is because the fixed edge weights computed based on initial network states (user-to-AP path losses) are no longer validated when users are moving over time, while the proposed architecture keeps updating the edge weights according to the latest states for every $\Tilde{T}$ slots.
Also, when $K$ is as small as $10$, no significant improvement can be observed in the simulated time. This is because the margin of system performance is small due to little contention and interference in the network, as mentioned before.

\begin{figure}[!t]
\centering
\includegraphics[scale=0.8]{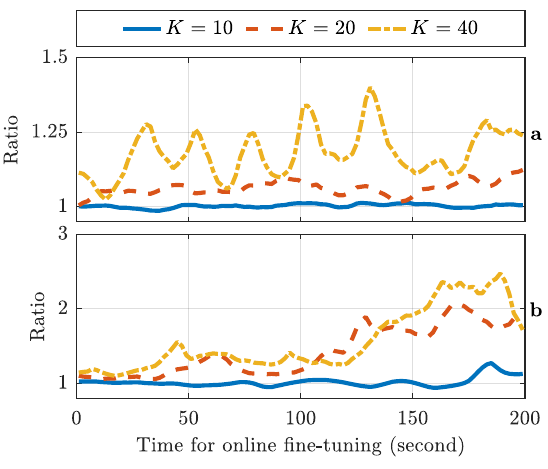}
\caption{Ratios of the worst-case user throughput when edge weights are fine-tuned over time to the one when edge weights are fixed as their initial values for different numbers of users $K$ in the network (a) with static users and (b) with users moving at $2$ meters per second.}
\label{fig:plot_online_mob_0_and_mob_2}
\vspace{-0.2cm}
\end{figure}

\section{Discussion on Limitations}
We discuss the limitations of our methods in this section.
\subsection{Limitation in Generalization}
Although we have shown how to apply the NNs trained in one scenario to other scenarios with different numbers of users/groups or user mobility, the generalization capability of our methods can still be improved. Specifically, we have assumed the number of APs is fixed in this work, which results in a fixed-dimensional network state for each user.
Under this assumption, we design the inference NN $\omega$ as a FNN with a fixed input dimension.
However, when the number of APs changes in the network, e.g., due to power outages or network expansion, $\omega$ cannot adapt to such a change. To address this issue, we can construct $\omega$ using NNs with flexible structures (e.g., GNNs that can take each user's path losses as features on a graph flexible to the number of APs.) and train it in networks with varying numbers of APs.
Additionally, the users and APs are assumed to be homogeneous in this work, while their configurations in practice can be heterogeneous, e.g., in packet rates, transmission power, receiver sensitivity and bandwidth. These features can be included as the network states and the actor and critic can be redesigned to process these states. 
Nevertheless, the above extensions do not alter the fundamental concept of the proposed AC-GRL.

\subsection{Limitation in Complexity}
Our methods have a polynomial complexity in terms of the number of users, as discussed in Section \ref{subsec:grl_algorithm_flow}. It can consume significant computation resources or computing time when the network has a large amount of users. The high complexity is because the graph that we construct is fully connected, where operations are needed for all edges.
Research can be done to reduce the complexity of our algorithm. Note that the complexity of our methods mainly consists of two parts, 1) NNs' forward propagation and backpropagation on each edge of the graph and 2) the max-cut SDP on the graph. This graph can be sparsified \cite{satuluri2011local} by removing the edges between any two users that are not either contending or interfering with each other directly, e.g., they are geographically separated at a large distance.
Consequently, we can reduce the times of NNs' forward propagation and backpropagation \cite{li2021adaptivegcn} as well as the complexity of the max-cut SDP \cite{arora2007combinatorial}.

\section{Conclusion}
In this paper, we studied how to use the RAW mechanism in Wi-Fi HaLow to improve the worst-case user throughput.
We proposed the framework that formulates the user grouping problem in the RAW slot assignment as the graph construction problem. Here, the graph's edge weights are adjusted to represent the contention and interference in each user pair, and the graph's max cut can obtain the user grouping decisions. We developed the AC-GRL algorithm to train NNs that generate the optimal edge weights based on users' path losses measured at AP. Further, we designed the architecture to fine-tune the edge weights generated by trained NNs according to online feedbacks. Simulation results show that our approach achieves much better worst-case user throughput than the existing approaches. 
Also, our online architecture can further improve worst-case user throughput by fine-tuning the NN-generated edge weights and can keep updating the edge weights according to varying network states. The limitations of this work and future research directions are also discussed.

\section*{Appendix: The Proof of Lemma \ref{lemma:equivalence_of_user_grouping_problem}}
\begin{proof}
We prove the statement by contradiction. Suppose the statement is false, which means there exist grouping decisions, $\mathbf{z}'$, maximizing the objective in \eqref{eq:prob:user_grouping} other than $\mathbf{z}^*$. Then, we construct edge weights as $W'_{i,j}$, $\forall i\neq j$, where $W'_{i,j} = \mathbf{1}_{\{z'_i\neq z'_j\}}$. Then, $\mathbf{z}'$ is the optimal solution that maximizes the LLP of \eqref{eq:prob:user_grouping:graph_cut:adaptive_edge_weighting} for the given edge weights as $W'_{i,j}$, $\forall i\neq j$, which means that $\mathbf{z}'$ is the grouping decisions if the edge weights are $W'_{i,j}$, $\forall i\neq j$.
Note that the network objectives in \eqref{eq:prob:user_grouping} and \eqref{eq:prob:user_grouping:graph_cut:adaptive_edge_weighting} are the same and only depend on the grouping decisions. Thus, the values of the objective achieved by $\mathbf{z}'$ in \eqref{eq:prob:user_grouping} and $W'_{i,j}$, $\forall i\neq j$, in \eqref{eq:prob:user_grouping:graph_cut:adaptive_edge_weighting} are the same because their grouping decisions are the same. It also implies that the value of the objective achieved by $\mathbf{z}^*$ in \eqref{eq:prob:user_grouping} is equal to the one achieved by $W^*_{i,j}$, $\forall i\neq j$, in \eqref{eq:prob:user_grouping:graph_cut:adaptive_edge_weighting} since $\mathbf{z}^*$ is the grouping decisions by cutting $W^*_{i,j}$, $\forall i\neq j$. We note that $W^*_{i,j}$, $\forall i\neq j$, maximizes \eqref{eq:prob:user_grouping:graph_cut:adaptive_edge_weighting}, which means the objective in \eqref{eq:prob:user_grouping:graph_cut:adaptive_edge_weighting} achieved by $W^*_{i,j}$, $\forall i\neq j$, is greater than or equal to the one in \eqref{eq:prob:user_grouping:graph_cut:adaptive_edge_weighting} achieved by $W'_{i,j}$, $\forall i\neq j$. This implies that the objective achieved by $\mathbf{z}^*$ in \eqref{eq:prob:user_grouping} is greater than or equal to the one achieved by $\mathbf{z}'$ in \eqref{eq:prob:user_grouping}, which is contradictory to the assumption on the optimality of $\mathbf{z}'$ and the non-optimality of $\mathbf{z}^*$ in \eqref{eq:prob:user_grouping} at the start of the proof. Therefore, the statement in Lemma \ref{lemma:equivalence_of_user_grouping_problem} is true.
\end{proof}

\section*{Appendix: Complexity Analysis of AC-GRL}
We first analyze the complexity of FNNs and GCNs that construct the actor and the critic, as well as the complexity of the max-cut SDP in \eqref{eq:prob:sub_cut_sdp}. 
The FNN's complexity in either forward propagation or backpropagation is quadratic to its input and output layer dimensions (assuming hidden layer dimensions are multiple of either input or output dimensions, e.g., the NN configurations in this work) \cite{goodfellow2016deep}. In our case, FNNs have constant dimensions regardless of $A$ and $K$, including $\dot{\mu}$, $\mathrm{EFE}$, $\mathrm{NFE}$, $\mathrm{HNFE}^l,\forall l$ and $\mathrm{ROF}$, while exceptionally the inference NN $\omega$ has dimensions linear to $A$. For FNNs with constant dimensions, we write their complexity as $\mathcal{O}(1)$ for each forward propagation or backpropagation. Meanwhile, we write the complexity of $\omega$ as $\mathcal{O}(A^2)$. The complexity of the GCN in \eqref{eq:gcn} is $\mathcal{O}(K\cdot M^2 + K^2\cdot M)\approx\mathcal{O}(K^2)$ in either forward propagation or backpropagation \cite{kipf2017semi,liu2020efficient}. The complexity of the max-cut SDP in \eqref{eq:prob:sub_cut_sdp} depends on the convex optimizer's implementation, and the typical complexity is $\mathcal{O}(|\mathcal{V}^{\beta,c}|^{3.5})$ \cite{boyd2004convex,bubeck2015convex,o2016conic,toh1999sdpt3} for a given (sub)set $\mathcal{V}^{\beta,c}$. Because each (sub)set has at most $K$ users, the SDP in \eqref{eq:prob:sub_cut_sdp} has a complexity upper-bounded in $\mathcal{O}(K^{3.5})$.

Based on the above analysis, we then provide the complexity of the AC-GRL algorithm. Each pre-training step of the inference NN $\omega$ in lines \ref{alg:line:training_inference_nn_start}-\ref{alg:line:training_ac_nn_end} costs $K^2$ times of $\omega$'s forward propagations and backpropagation, which has a complexity $K^2\cdot\mathcal{O}(A^2) \approx \mathcal{O}(K^2 A^2)$.
In each step of the main process in lines \ref{alg:line:training_ac_nn_start}-\ref{alg:line:training_ac_nn_end}, the actor generates the edge weights for each user pair, which involves $K^2$ repetitions of the actor's forward propagation using $\dot{\mu}$ and $\omega$, which has a complexity $K^2\cdot(\mathcal{O}(1) + \mathcal{O}(A^2)) \approx \mathcal{O}(K^2 A^2)$. Next, the graph cut procedure divides users into $Z$ groups using \eqref{eq:z_by_DoGraphCut}, which requires solving $Z-1$ (or $\sum_{\beta=0}^{\log_2(Z)-1}2^\beta$) times of the SDP in \eqref{eq:prob:sub_cut_sdp}. Thus, the graph cut procedure (where the SDP's complexity dominates) has a complexity $\mathcal{O}( (Z-1) K^{3.5})$. The update of the critic's parameters first requires one forward propagation and backpropagation of the critic. The critic uses $K^2$, $K$, $K$ and $K$ times of $\mathrm{EFE}$, $\mathrm{NFE}$, $\mathrm{HNFE}^l, \forall l$ and $\mathrm{ROF}$, respectively. Additionally, the critic uses $\zeta E$ GCNs. Thus, the complexity of the update of the critic is $\mathcal{O}(K^2+K+K\zeta+K+K^2\zeta E)\approx \mathcal{O}(K^2)$. The update of the actor requires one forward propagation and backpropagation of the critic as well as $K^2$ backpropagation of $\dot{\mu}$, which has a complexity $\mathcal{O}(K^2 + K^2)\approx \mathcal{O}(K^2)$ in total. Note that $\omega$ is not counted in the complexity of the actor and critic's update because we can use its outputs computed in the actor's weight generation, and we do not update $\omega$'s parameters here.
\bibliography{ref}
\bibliographystyle{IEEEtran}

\begin{IEEEbiography}
[{\includegraphics[width=1in,height=1.25in,clip,keepaspectratio]{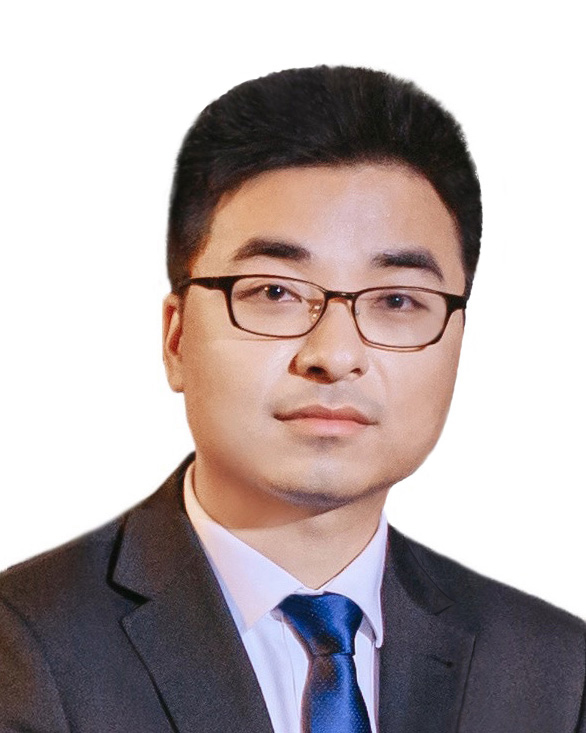}}]{Zhouyou Gu} received the B.E. (Hons.) and M.Phil.
degrees from The University of Sydney (USYD), Australia,
in 2016 and 2019, respectively, where he completed his Ph.D. degree with the School of Electrical and Information Engineering in 2023. 
He was a research assistant at the Centre for IoT and Telecommunications at USYD. His research interests include
designs of real-time schedulers, programmability, and graph and machine learning methods in wireless networks.
\end{IEEEbiography}

\begin{IEEEbiography}
[{\includegraphics[width=1in,height=1.25in,clip,keepaspectratio]{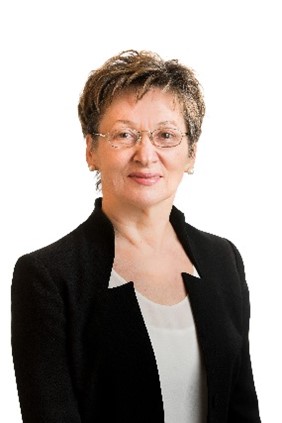}}]{Branka Vucetic} (Life Fellow, IEEE) received the B.S., M.S., and Ph.D. degrees in electrical engineering from the University of Belgrade, Belgrade, Serbia. She is an Australian Laureate Fellow, a Professor of Telecommunications, and the Director of the Centre for IoT and Telecommunications, the University of Sydney, Camperdown, NSW, Australia. Her current research work is in wireless networks and Industry 5.0. In the area of wireless networks, she works on communication system design for 6G and wireless AI. In the area of Industry 5.0, her research is focused on the design of cyber–physical human systems and wireless networks for applications in healthcare, energy grids, and advanced manufacturing. She is a Fellow of the Australian Academy of Technological Sciences and Engineering and the Australian Academy of Science.
\end{IEEEbiography}

\begin{IEEEbiography}
[{\includegraphics[width=1in,height=1.25in,clip,keepaspectratio]{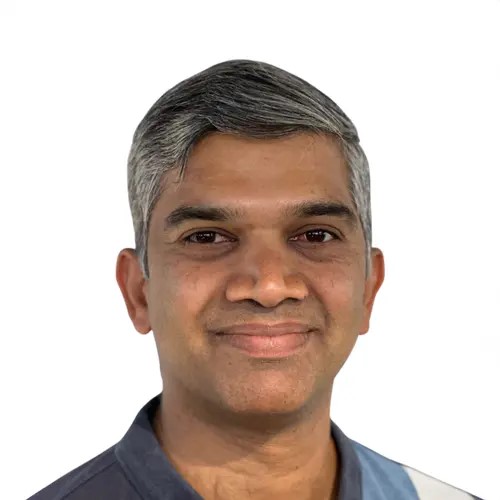}}]{Kishore Chikkam} is the VP Systems \& Software at Morse Micro. Kishore has two decades of experience in systems design for wireless local area network chips. Before joining Morse Micro, Kishore worked for large semiconductor companies, including Athena Semiconductors, LifeSignals, and Broadcom. Kishore leads the systems and software teams from design to deployment in the field to ensure the world-leading performance of Morse Micro chips.
\end{IEEEbiography}

\begin{IEEEbiography}
[{\includegraphics[width=1in,height=1.25in,clip,keepaspectratio]{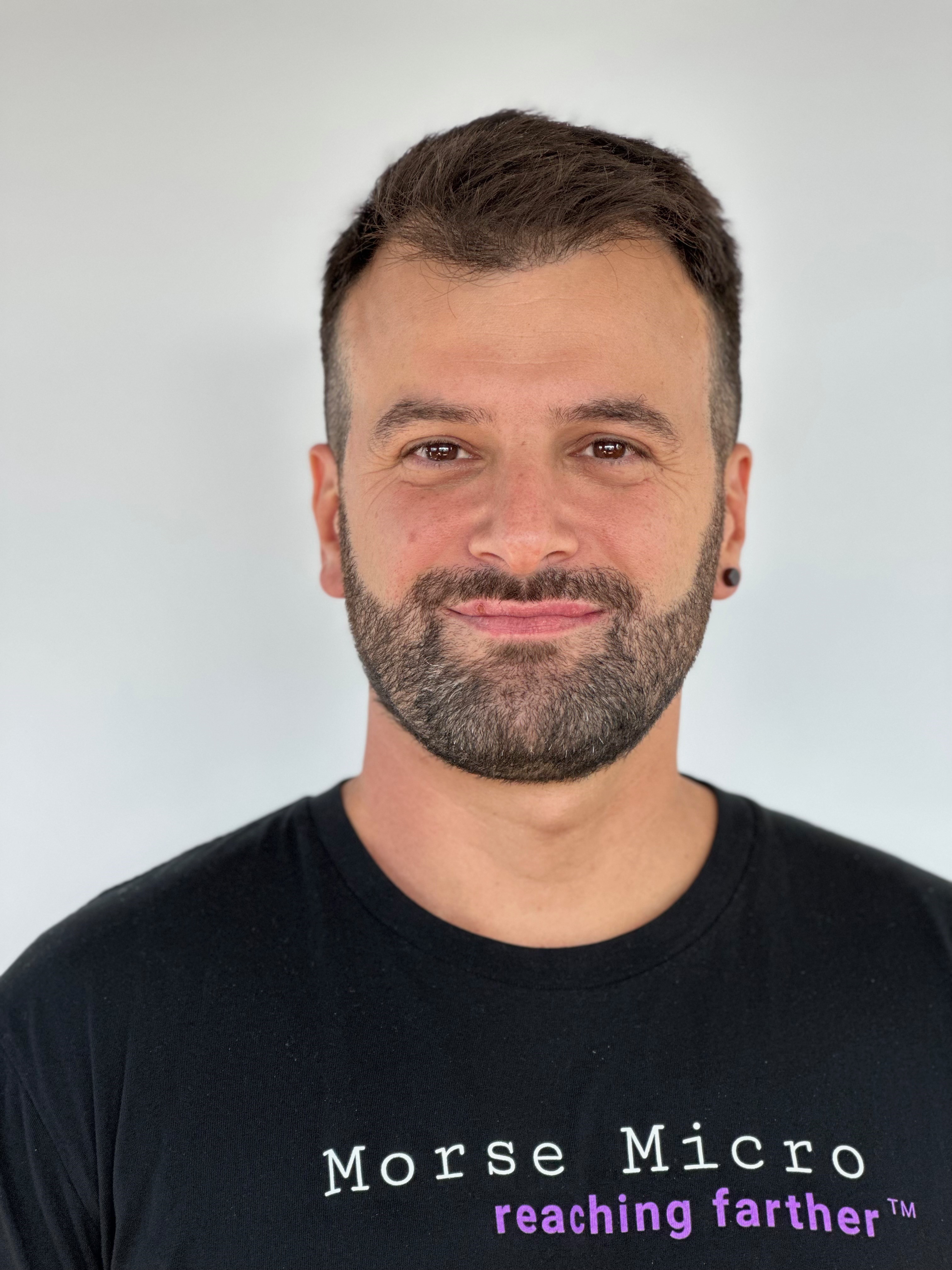}}]{Pasquale Aliberti} is the Chief of Staff at Morse Micro. Pasquale manages collaborations and intellectual properties at Morse Micro. Before joining Morse Micro, Pasquale worked at Silicon Quantum Computing and at the Photovoltaics Centre of Excellence for Photovoltaics at the University of New South Wales.
\end{IEEEbiography}

\begin{IEEEbiography}
[{\includegraphics[width=1in,height=1.25in,clip,keepaspectratio]{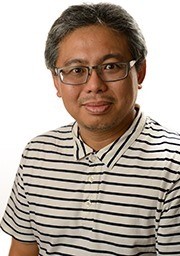}}]{Wibowo Hardjawana} (Senior Member, IEEE) received a PhD in electrical engineering from The University of Sydney, Australia. He is currently a Senior Lecturer in Telecommunications with the School of Electrical and Information Engineering at the University of Sydney. Before that, he was with Singapore Telecom Ltd., managing core and radio access networks. His current fundamental and applied research interests are in AI applications for 5/6G cellular radio access and WiFi networks. He focuses on system architectures, resource scheduling, interference, signal processing, and the development of wireless standard-compliant prototypes. He has also worked with several industries in the area of 5G and long-range WiFi. He was an Australian Research Council Discovery Early Career Research Award Fellow.
\end{IEEEbiography}

\end{document}